\theoremstyle{plain}\newtheorem{remark}[thm]{Remark}
\theoremstyle{plain}\newtheorem{example}[thm]{Example}
\theoremstyle{plain}\newtheorem{question}[thm]{Question}
\newcommand{\da}{\mathord{\downarrow}}
\newcommand{\ua}{\mathord{\uparrow}}
\newcommand{\bigsup}{\bigvee}
\newcommand{\biginf}{\bigwedge}
\newcommand{\aux}{\operatorname{Aux}}
\newcommand{\app}{\operatorname{App}}
\newcommand{\papp}{\operatorname{PApp}}
\newcommand{\intr}{\operatorname{int}}
\newcommand{\cl}{\operatorname{cl}}
\newcommand\twoheaduparrow{\mathord{\rotatebox[origin=c]{90}{$\twoheadrightarrow$}}}
\newcommand\twoheaddownarrow{\mathord{\rotatebox[origin=c]{90}{$\twoheadleftarrow$}}}
\newcommand{\dda}{\twoheaddownarrow}
\newcommand{\dua}{\twoheaduparrow}
\newcommand{\lra}{\longrightarrow}
\newcommand{\lap}[1]{{#1}^{\da \prec}}
\newcommand{\uap}[1]{{#1}^{\ua \prec}}
\newcommand{\lapa}[2]{#1^{\da \prec_{#2}}}
\newcommand{\uapa}[2]{#1^{\ua \prec_{#2}}}
\begin{document}

\title[Domains via approximation operators]{Domains via approximation operators}

\author[Z. Zou]{Zhiwei Zou{\rsuper{a}}}	
\author[Q. Li]{*Qingguo Li{\rsuper{b}}}	
\author[W. K. Ho]{Weng Kin Ho{\rsuper{c}}}	

\address{{\lsuper{a,b}}College of Mathematics and Econometrics,
         Hunan University, Changsha, Hunan, China 410082}	
\email{zouzhiwei1983@163.com}  
\email{liqingguoli@aliyun.com}  
\thanks{This work was supported by the National Science Foundation of China (No.11371130, 11101135) and the Higher School Doctoral Subject Foundation of Ministry of Education of China (No. 20120161110017). \\
*Corresponding author}	

\address{{\lsuper{c}}National Institute of Education, Nanyang Technological University, Singapore 637616}	
\email{wengkin.ho@nie.edu.sg}  



\keywords{Continuous poset; Scott topology; Approximation operator; Auxiliary relation; Approximating relation; Topology induced by approximating relation}
\subjclass[2010]{06B35}


\begin{abstract}
  \noindent In this paper, we tailor-make new approximation operators inspired by rough set theory and specially suited for domain theory. Our approximation operators offer a fresh perspective to existing concepts and results in domain theory, but also reveal ways to establishing novel domain-theoretic results.  For instance, (1) the well-known interpolation property of the way-below relation on a continuous poset is equivalent to the idempotence of a certain set-operator; (2) the continuity of a poset can be characterized by the coincidence of the Scott closure operator and the upper approximation operator induced by the way below relation; (3) meet-continuity can be established from a certain property of the topological closure operator.  Additionally, we show how, to each approximating relation, an associated order-compatible topology can be defined in such a way that for the case of a continuous poset the topology associated to the way-below relation is exactly the Scott topology.  A preliminary investigation is carried out on this new topology.
\end{abstract}

\maketitle

\section{Introduction} \label{sec: intro}

The invention of \emph{domain theory} by Dana Scott in around the late 1960s was originally intended to give a denotational semantics for the $\lambda$-calculus~\cite{scott72a,scott72b,scott75a}.  Domain theory is a theory of approximation as it formalizes the notion of partial or incomplete information to represent computations that have not yet returned a result, and thus provides the mathematical foundation to explicate the phenomenon of approximation in computation.  Another important but entirely independent theory of approximation that has far-reaching impact, particularly, in knowledge representation and reasoning as well as in data analysis is that of \emph{rough set theory}.  Introduced by Zdis{\l}aw Pawlak in~\cite{pawlak91} as a mathematical tool for imperfect data analysis, rough set theory saw many applications in decision support occurring in various disciplines such as computer science, engineering, medicine, education and others.

Earlier works due to Guo-Qiang Zhang (\cite{zhang91,zhang06a,zhang06b}) already identified that domain theory, in the form of Scott information systems and formal concept analysis, has something to do with rough set theory because of its applications in knowledge representation and reasoning.  But it was Yi Yu Yao (\cite{yao04,yao06}) who first explicitly pointed out the role that domain theory plays in rough set theory relying on the notion of object-oriented concept. Following up on Yao's work, Lei and Luo gave a rough concept representation of complete lattices and showed that rough approximable concepts represent algebraic lattices via two key representation theorems (Theorems 3.2 and 4.6,\cite{leiluo09}).

While the aforementioned works demonstrate \emph{in one direction} that domain theory manifests in rough set theory via formal concept analysis, it has never been shown \emph{in the other direction} what r\^{o}le rough set theory plays in domain theory.  It is for this very reason that we write this paper -- we demonstrate how concepts and well-known results in domain theory can formulated and obtained via our rough set theoretic approach.  Readers who are well-acquainted with rough set theory know very well that in Pawlak's original formulation, the lower and upper approximation operators induced by an equivalence relation $R$ on a universe $\mathbb{U}$ of finite objects play a fundamental r\^{o}le. However, in domain theory, the way-below relation $\ll$ on a poset (which is of central importance) is far from being reflexive, let alone an equivalence relation.  For us, the most critical step in the adopting a rough set theoretic approach to domain theory is to get around this problem by defining a new set of approximation operators that is best suited for irreflexive binary relations on posets.

We first gather all the domain-theoretic preliminaries needed for our present development in Section~\ref{sec: prelim}, and then study how the core concepts in domain theory can be formulated using ideas from rough set theory in Section~\ref{sec: approximation operators}.  This study is carried out over two subsections: (1) We begin by defining the new approximation operators induced by an auxiliary relation on a poset in Subsection~\ref{subsec: approx operators on aux rel}.  Amongst the several order-theoretic properties that we prove concerning these operators, the highlight is a certain characterization of the interpolation property enjoyed by a given auxiliary relation in terms of the lower (respectively, upper) approximation operators.  (2) We proceed to establish more domain-theoretic results in Subsection~\ref{subsec: approx operators on approx rel} by requiring further that the auxiliary relation is (pre-)approximating.  To this achieve this, we introduce a new order-compatible topology induced by an (pre-)approximating relation.  In the case of a continuous poset, we established that the topology induced by the way-below relation is \emph{exactly} the Scott topology.  This newly introduced topology is studied in greater detail in Section~\ref{sec: further properties}.

In this paper, we characterize continuous posets to be precisely those for which the upper approximation operator induced by the way-below relation coincides with the Scott closure operator. Based on this new characterization of continuity, we obtain a novel characterization of Scott closed sets of a continuous poset--a matter which we continue to pursue further later in Section~\ref{sec: scott closure}.  This further investigation reveals some link with an earlier result (unpublished) by Dongsheng Zhao~\cite{zhao92}.  We conclude this paper in Section~\ref{sec: conclusion} by pointing the reader to possible future directions and posing some open questions.

For references, we point the reader to \cite{abramskyjung94,gierzetal03} for domain theory, and \cite{pawlak91} for rough set theory.

\section{Preliminaries} \label{sec: prelim}
A \emph{partially ordered set} (\emph{poset}, for short) is a non-empty set $P$ equipped with a reflexive, transitive and antisymmetric relation $\leq$.  If $P$ is a poset and $X \subseteq P$, we use the symbol $\bigvee X$ (respectively, $\bigwedge X$) to denote the \emph{least upper bound} or \emph{supremum} (respectively, the \emph{greatest lower bound} or \emph{infimum}) of $X$, if it exists.  We define the \emph{upper closure} of $X$ by
\[
\ua X := \{x \in P \mid \exists y \in X.~y \leq x\},
\]
and dually, $\da X$, the \emph{lower closure} of $X$.
If $X = \ua X$ (respectively, $X = \da X$), then $X$ is called an \emph{upper} (respectively, \emph{lower}) subset of $P$. It is well known that the collection of all upper (respectively, lower) subsets of a poset $P$ ordered by inclusion is a complete lattice, and we denote it by $\mathcal{U}(P)$
(respectively, $\mathcal{L}(P)$).

A subset $D$ of $P$ is \emph{directed} (dually, \emph{filtered}) provided it is non-empty and every finite subset of $D$ has an upper (dually, lower) bound in $D$.  A poset of which every directed subset has a supremum is called a \emph{directed-complete partial order} (\emph{dcpo}, for short).

In domain theory, the most important binary relation on a poset is the way-below relation, which we define as follows.  Let $P$ be a poset.  We say that $x$ is \emph{way below} $y$, denoted by $x \ll y$ if for all directed subsets $D \subseteq P$ for which $\bigsup D$ exists, the relation $\bigsup D \geq y$ always implies $x \in \da D$.  An element $x$ satisfying the condition $x \ll x$ is said to be \emph{compact}. For each $x \in P$, we use the notations
 \[
 \dda x := \{y \in P \mid y \ll x\}
 \text{ and }
 \dua x := \{y \in P \mid x \ll y\}.
 \]
It follows immediately from the definition of $\ll$ that
$\dda x$ (respectively, $\dua x$) is a lower (respectively, upper) subset of $P$.   A poset $P$ is said to be \emph{continuous} if for any $x \in P$, the set $\dda x$ is directed and
$x = \bigsup \dda x$.

The following proposition often comes in handy to prove that a poset is continuous.
\begin{prop} \label{prop: basis lemma}\cite{gierzetal03}
Let $P$ be a poset.  If for each $x \in P$ there exists a directed set $D_x \subseteq \dda x$ with $\bigsup D_x = x$, then \begin{enumerate}
\item $\dda x$ is directed, and
\item $\bigsup \dda x = x$.
\end{enumerate}
In particular, $P$ is continuous.
\end{prop}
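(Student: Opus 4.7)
The plan is to establish the two numbered claims in order and then read off continuity directly from its definition. The key observation, to be used repeatedly, is that if $y\ll x$ and $D$ is any directed set with $\bigsup D = x$, then $y\in\da D$; this lets us funnel the hypothesis on $D_x$ into statements about arbitrary elements way below $x$.

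First I would prove directedness of $\dda x$. Take any $u,v\in\dda x$. Applying the defining condition of $\ll$ to the directed set $D_x$ (whose supremum is $x$), I obtain $d_1,d_2\in D_x$ with $u\leq d_1$ and $v\leq d_2$. Since $D_x$ is directed, there is some $e\in D_x$ with $d_1,d_2\leq e$, hence $u,v\leq e$. Because $D_x\subseteq\dda x$ by hypothesis, $e\in\dda x$, which gives the required common upper bound inside $\dda x$. Non-emptiness of $\dda x$ is immediate from the non-emptiness of $D_x\subseteq\dda x$.

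Next I would verify $\bigsup\dda x = x$. That $x$ is an upper bound of $\dda x$ follows from the same funnel: for any $y\ll x$, the defining condition applied to $D_x$ yields $y\leq d$ for some $d\in D_x$, and $d\leq x$ since $x = \bigsup D_x$, whence $y\leq x$. For the least-upper-bound part, if $z$ is any upper bound of $\dda x$, then $z$ is in particular an upper bound of $D_x$ (as $D_x\subseteq\dda x$), so $z\geq\bigsup D_x = x$.

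Once both (1) and (2) are in hand, continuity of $P$ is simply the assertion that for every $x$ the set $\dda x$ is directed with supremum $x$, which is exactly what has been shown. I do not foresee a genuine obstacle; the only point that requires a little care is making sure the directed set to which one applies the definition of $\ll$ is really $D_x$ (and not $\dda x$ itself, whose directedness we are still trying to prove), so that the argument is not circular.
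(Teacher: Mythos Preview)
Your proof is correct and follows the standard argument for this classical result. Note, however, that the paper does not supply its own proof of this proposition: it is stated with a citation to \cite{gierzetal03} and left unproven, so there is no in-paper argument to compare against. Your approach is precisely the textbook one (as in \cite{gierzetal03}): use the definition of $\ll$ against the witnessing directed set $D_x$ to push any $u,v\in\dda x$ below elements of $D_x$, then use directedness of $D_x$ to find a common upper bound in $D_x\subseteq\dda x$; and for the supremum, sandwich $\dda x$ between $D_x$ and $\da x$. One minor remark: your verification that $x$ is an upper bound of $\dda x$ is slightly roundabout---the inclusion $\dda x\subseteq\da x$ (i.e., $y\ll x\implies y\leq x$) is an immediate consequence of applying the definition of $\ll$ to the directed set $\{x\}$, so you need not route through $D_x$ for that step.
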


A binary relation $\prec $ on a poset $P$ is called an \emph{auxiliary relation} if it satisfies the following conditions for any $u,~x,~y$ and $z \in P$:
\begin{enumerate}
\item $x \prec y$ implies $x \leq y$;
\item $u \leq x \prec y \leq z$ implies $u \prec z$;
\item if the smallest element $\bot$ of $P$ exists, then $\bot \prec x$.
\end{enumerate}
The set of all auxiliary relations on $P$ is denoted by $\aux(P)$.
Clearly, both $\leq$ and $\ll$ are auxiliary relations on $P$.

Let ${\prec}\in \aux(P)$ and $x \in P$.
We denote the set $\{y \in P \mid y \prec x\}$ by $s_\prec (x)$.
From the definition of auxiliary relation, we have that $s_{\prec}(x) \in \mathcal{L}(P)$.

The set $\aux(P)$ is a complete lattice with respect to the containment of graphs as subsets of $P \times P$ since it is closed under arbitrary intersections and unions in $P \times P$.

An auxiliary relation $\prec$ on a poset $P$ is called \emph{pre-approximating} if for any $x \in P$, $s_{\prec}(x)$ is directed. If in addition $x = \bigsup s_\prec (x)$ holds for any $x \in P$, then we say that $\prec$ is \emph{approximating}. The set of all pre-approximating (respectively, approximating) auxiliary relations is denoted by $\papp(P)$ (respectively, $\app(P)$).  For a poset $(P,\leq)$, the order relation $\leq$ is trivially approximating, and the way below relation $\ll$ is approximating if and only if $P$ is continuous.

\begin{prop} \label{prop: cont implies way-below is smallest approx}\cite{gierzetal03}
In a poset $P$, the way below relation $\ll$ is contained in all approximating auxiliary relations, i.e.,
\[
\ll ~\subseteq ~\bigcap \{\prec \mid {\prec}\in \app(P)\}.
\]
If $P$ is a continuous poset, then
\[
\ll ~=~ \bigcap \{\prec \mid {\prec}\in \app(P)\}.
\]
Thus, the following are equivalent for a poset $P$:
\begin{enumerate}
\item $P$ is continuous.
\item The way-below relation $\ll$ is the smallest approximating relation on $P$.
\end{enumerate}
\end{prop}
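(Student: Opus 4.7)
The plan is to first establish the containment $\ll\,\subseteq\,\prec$ for any approximating auxiliary relation $\prec$, and then to deduce the remaining statements almost for free from the fact (noted in the text) that $\ll$ is itself approximating precisely when $P$ is continuous.

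For the first part, fix ${\prec}\in\app(P)$ and suppose $x\ll y$. Because $\prec$ is approximating, the set $s_{\prec}(y)=\{z\in P\mid z\prec y\}$ is directed and satisfies $\bigsup s_{\prec}(y)=y$. Applying the definition of the way-below relation to this particular directed set, we obtain $x\in{\da}\,s_{\prec}(y)$, so there exists $z\in P$ with $x\leq z\prec y$. Condition~(2) in the definition of an auxiliary relation then yields $x\prec y$. This shows $\ll\,\subseteq\,\bigcap\{\prec\mid \prec\in\app(P)\}$.

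For the second part, assume $P$ is continuous. Then by the remark immediately preceding the proposition, $\ll$ is itself an approximating relation, i.e., ${\ll}\in\app(P)$. Hence the intersection $\bigcap\{\prec\mid \prec\in\app(P)\}$ is contained in $\ll$, and combined with the first part we obtain the equality. The equivalence between (1) and (2) is then immediate: if $P$ is continuous, $\ll$ is approximating by the cited remark and smallest by what we just proved; conversely, if $\ll$ is the smallest approximating relation, then in particular $\ll$ belongs to $\app(P)$, which by the same remark forces $P$ to be continuous.

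The only step that requires any real thought is the first one, and even there the entire subtlety is recognizing that the directedness and supremum-generating property of $s_{\prec}(y)$ give exactly the directed set needed to activate the definition of $\ll$; the transition from $x\leq z\prec y$ to $x\prec y$ is then a routine appeal to the monotonicity axiom for auxiliary relations. The rest of the argument is entirely bookkeeping once the earlier characterization of continuity as ``${\ll}\in\app(P)$'' is taken as given.
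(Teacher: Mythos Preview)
Your argument is correct in every detail. Note, however, that the paper does not supply its own proof of this proposition: it is quoted as a known result from \cite{gierzetal03}, so there is nothing to compare against beyond observing that your proof is exactly the standard one found in that reference.
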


An auxiliary relation $\prec$ on a poset $P$ is said to satisfy the \emph{Interpolation Property} (INT), provided that for all $x$ and $z \in P$ we have
\begin{equation} \tag{INT}
(x \prec z) \implies (\exists y \in P.~ x \prec y \prec z).
\end{equation}

In domain theory, much depends on the fact that $\ll$ satisfies the interpolation property.
\begin{prop} \label{prop: cont implies way-below is INT}\cite{gierzetal03}
Let $P$ be a continuous poset.  Then $\ll$ satisfies (INT).
\end{prop}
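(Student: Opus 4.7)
The plan is to derive \textbf{(INT)} by exhibiting a natural auxiliary relation that captures exactly the interpolation condition, showing that it is approximating, and then invoking Proposition~\ref{prop: cont implies way-below is smallest approx}. Define a binary relation $\prec$ on $P$ by
\[
x \prec z \iff \exists\, y \in P.~ x \ll y \ll z.
\]
If we can prove that $\prec$ is an approximating auxiliary relation, then Proposition~\ref{prop: cont implies way-below is smallest approx} (which holds in any poset, for the containment direction) yields ${\ll} \subseteq {\prec}$, which is precisely (INT).

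The first step is routine: verify the three axioms of $\aux(P)$. Reflexivity issues do not arise; $x \prec z$ implies $x \leq z$ is immediate via $x \ll y \ll z \Rightarrow x \leq z$; the sandwich condition $u \leq x \prec z \leq w \Rightarrow u \prec w$ follows from the corresponding property of $\ll$ applied to both ends of the witnessing interpolant; and if $\bot$ exists, then $\bot \ll y \ll z$ for any $y \in \dda z$, which is non-empty by continuity (since $z = \bigsup \dda z$ and $\bot \in \dda z$ in a continuous poset).

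The main work lies in showing $\prec$ is approximating, i.e., for each $z \in P$ the set $s_\prec(z) = \{x : \exists y, x \ll y \ll z\}$ is directed with $z = \bigsup s_\prec(z)$. For directedness, given $x_1, x_2 \in s_\prec(z)$ with witnesses $x_i \ll y_i \ll z$, I would use continuity twice: first, since $\dda z$ is directed by continuity, pick $y \in \dda z$ above both $y_1, y_2$, so that $x_i \ll y_i \leq y$ gives $x_i \ll y$; second, apply continuity again to $y$, so $\dda y$ is directed and contains both $x_1$ and $x_2$, yielding an upper bound $x \in \dda y$. Then $x \ll y \ll z$ shows $x \in s_\prec(z)$, as required. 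For the supremum equality, observe that $s_\prec(z) \subseteq \dda z$ and conversely, for any $w \in \dda z$, the whole set $\dda w$ lies in $s_\prec(z)$ (because $u \in \dda w$ gives $u \ll w \ll z$); hence any upper bound of $s_\prec(z)$ is an upper bound of each $\dda w$, so by continuity dominates each $w = \bigsup \dda w$, and therefore dominates $z = \bigsup \dda z$. Combined with $s_\prec(z) \subseteq \dda z \subseteq \da z$, this gives $\bigsup s_\prec(z) = z$.

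The chief obstacle is the directedness argument, which is the only place that genuinely uses continuity in a non-trivial way (indeed, it uses it twice, at $z$ and at the interpolant $y$); everything else is bookkeeping. Once approximability of $\prec$ is established, the conclusion is a one-line consequence of Proposition~\ref{prop: cont implies way-below is smallest approx}.
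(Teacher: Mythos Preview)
Your argument is correct. The relation $\prec$ you introduce is an auxiliary relation, the directedness and supremum verifications go through exactly as you outline, and Proposition~\ref{prop: cont implies way-below is smallest approx} then delivers ${\ll} \subseteq {\prec}$, which is (INT). One small point: you should also explicitly note that $s_\prec(z)$ is non-empty (needed for directedness), but this is immediate since $\dda z$ and $\dda y$ are directed, hence non-empty, for any $y \in \dda z$.

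As for comparison with the paper: the paper gives no proof of its own here---the statement is simply quoted from \cite{gierzetal03}. Your approach is in fact a mild repackaging of the standard argument found there: the classical proof shows directly that the set $I := \bigcup_{y \ll z} \dda y$ is a directed lower set with supremum $z$, and then applies the definition of $x \ll z$ to this particular directed set to obtain $x \in I$. What you have done is recognise that $I = s_\prec(z)$ and that the final step ``$x \ll z = \bigsup I$ directed $\Rightarrow x \in \da I$'' is exactly the content of Proposition~\ref{prop: cont implies way-below is smallest approx}. So the two routes are the same argument viewed through different lenses; yours has the virtue of tying the result cleanly to the abstract machinery of approximating relations already set up in the paper.
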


The most prominently featured topology in domain theory is the \emph{Scott topology}. Let $P$ be a poset and $U \subseteq P$.  We say that $U$ is \emph{Scott open} if
\begin{enumerate}
\item $U = \ua U$;
\item $U$ is inaccessible by directed suprema, i.e., whenever a directed subset $D$ is such that $\sup D \in U$, there already exists $d \in D$ such that $d \in U$.
\end{enumerate}
A set is \emph{Scott closed} if it is the complement of a Scott open set.
It follows immediately that a subset $A$ of $P$ is Scott closed if and only if it is lower and closed under the formation of directed suprema.
The collection of all Scott open subsets of $P$ forms a topology on $P$ called the \emph{Scott topology} and is denoted by $\sigma (P)$.

As a direct consequence of Proposition~\ref{prop: cont implies way-below is INT}, we have:
\begin{prop} \label{prop: Scott topology base for cont}\cite{gierzetal03}
Let $P$ be a continuous poset and $x \in P$.
\begin{enumerate}
\item The set $\dua x$ is Scott open.
\item The collection $\mathcal{B} := \{\dua x \mid x \in P\}$ forms a base for the Scott topology on $P$.
\item The Scott interior of $\ua x$ is precisely $\dua x$, i.e., $\intr_\sigma(\ua x) = \dua x$.
\end{enumerate}
\end{prop}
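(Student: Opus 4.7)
The proof breaks into three parts, one per clause, with clause~(1) being the workhorse. My plan is to prove (1) directly using (INT) together with the definition of $\ll$, then derive (2) from (1) plus the continuity hypothesis, and finally squeeze (3) between (1), (2), and the auxiliary-relation properties.

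For clause~(1), I first note that $\dua x$ is automatically an upper set, since $x \ll y \leq z$ yields $x \ll z$ by the second clause in the definition of an auxiliary relation. The nontrivial content is inaccessibility by directed suprema. Suppose $D$ is directed, $\bigsup D$ exists, and $\bigsup D \in \dua x$, i.e.\ $x \ll \bigsup D$. I cannot conclude $d \in \dua x$ for some $d \in D$ straight from the definition of $\ll$; the definition only gives $d \in D$ with $x \leq d$, which is too weak. This is where I would invoke Proposition~\ref{prop: cont implies way-below is INT}: by (INT), there exists $y \in P$ with $x \ll y \ll \bigsup D$. Applying the definition of $\ll$ to $y \ll \bigsup D$ produces some $d \in D$ with $y \leq d$, and then $x \ll y \leq d$ gives $x \ll d$, i.e.\ $d \in \dua x$. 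So (INT) is exactly the step that lets me upgrade the witness $d$ from $\ua x$ to $\dua x$; this is the only subtle point in the whole proposition.

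For clause~(2), each $\dua x$ is already Scott open by (1), so it suffices to show that for any Scott open $U$ and any $y \in U$, there is some $x \in P$ with $y \in \dua x \subseteq U$. Continuity of $P$ tells me that $\dda y$ is directed and $\bigsup \dda y = y$. Since $U$ is Scott open and $y = \bigsup \dda y \in U$, inaccessibility delivers an element $x \in \dda y \cap U$. Then $x \ll y$ says $y \in \dua x$, while $\dua x \subseteq \ua x \subseteq U$ follows because $U$ is an upper set containing $x$.

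For clause~(3), the inclusion $\dua x \subseteq \intr_\sigma(\ua x)$ is immediate: $\dua x$ is Scott open by~(1) and is contained in $\ua x$ by the auxiliary-relation property $x \ll y \Rightarrow x \leq y$. For the reverse inclusion, pick $y \in \intr_\sigma(\ua x)$. Since $\intr_\sigma(\ua x)$ is Scott open and contains $y$, clause~(2) furnishes some $z$ with $y \in \dua z \subseteq \intr_\sigma(\ua x) \subseteq \ua x$. Thus $x \leq z \ll y$, and the auxiliary-relation property forces $x \ll y$, giving $y \in \dua x$. Assembling the three inclusions finishes the proof; no step here is harder than invoking (INT) once in clause~(1).
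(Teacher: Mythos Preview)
Your proof is correct and is essentially the standard argument one finds in the cited reference. Note that the paper itself does not supply a proof of this proposition: it is stated as a preliminary result imported from \cite{gierzetal03}, so there is no in-paper proof to compare against. Your use of (INT) in clause~(1), continuity plus inaccessibility in clause~(2), and the squeeze via the base in clause~(3) match the classical development.
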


\section{Approximation operators} \label{sec: approximation operators}

We can define many kinds of operators for domain theory. In this paper, we aim to get "approximation" operators. Here approximation operator means that for each subset of a poset, it is always captured between its lower and upper approximations.

\subsection{Auxiliary relation}
\label{subsec: approx operators on aux rel}
In this subsection, we define a new set of approximation operators for domain theory.
\begin{defi}[Lower and upper approximations] \label{defi: lower and upper approx in dom}
Let $P$ be a poset and ${\prec}\in \aux(P)$.  For each subset $A$ of $P$, the lower and upper approximations of $A$ (with respect to $\prec$), denoted by $\lap{A}$ and $\uap{A}$ respectively, are defined by
\[
\lap{A} := \{x \in P \mid x \in A \ \& \ s_\prec(x) \cap A \neq \emptyset \}
\]
and
\[
\uap{A} := \{x \in P \mid s_\prec(x) \subseteq \da A\}.
\]
\end{defi}

The following are immediate consequences arising from Definition~\ref{defi: lower and upper approx in dom}:
\begin{prop}
\label{prop: basic uap and lap}
Let $P$ be a poset, ${\prec}\in \aux(P)$, $A \subseteq P$ and $a \in P$.  Then the following properties hold:
\begin{enumerate}
\item $\uap{A} = \uap{(\da A)}$.
\item $A^{\ua \leq} = \da A$ and $A^{\da \leq} = A$.
\item $(\ua a)^{\da \prec} = \{x \in P \mid a \prec x\}$.
\item For any set $A \subseteq P$, $x \in \lap{A}$ if and only if there exists $y \in A$ such that $y \prec x$.
\item $\uap{A} \in \mathcal{L}(P)$.
\item If $A \in \mathcal{U}(P)$, then $\lap{A} \in \mathcal{U}(P)$.
\end{enumerate}
\end{prop}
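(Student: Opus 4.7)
The plan is to verify each of the six items by unwinding the definitions of $\lap{\cdot}$ and $\uap{\cdot}$ and appealing to the defining clauses of an auxiliary relation (essentially clauses (1) and (2)---clause (3) on $\bot$ plays no role here). None of the items needs machinery beyond this, so my focus in each case is to pinpoint which clause carries the argument.

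For (1), the expression defining $\uap{A}$ depends on $A$ only through $\da A$, and $\da(\da A) = \da A$, so the equality is immediate. For (2), specializing $\prec$ to $\leq$ gives $s_\leq(x) = \da x$, so $A^{\ua \leq} = \{x \mid \da x \subseteq \da A\}$ collapses to $\da A$ (using $x \in \da x$ in one direction and monotonicity of $\da(\cdot)$ in the other), while $A^{\da \leq} = \{x \in A \mid \da x \cap A \neq \emptyset\}$ collapses to $A$ since $x \in A$ already witnesses the intersection being nonempty. For (3), a member of $(\ua a)^{\da \prec}$ is an $x$ with $a \leq x$ for which some $y \in \ua a$ has $y \prec x$; clause (2) turns $a \leq y \prec x$ into $a \prec x$, and conversely $a \prec x$ gives $a \leq x$ via clause (1), with $y := a$ as a witness.

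Item (4) is a rephrasing of the defining membership condition for $\lap{A}$: the forward direction extracts a witness $y \in s_\prec(x) \cap A$, and the backward direction reassembles it. Items (5) and (6) are then lower/upper-set closure checks, both governed by clause (2). For (5), if $x \in \uap{A}$ and $y \leq x$, then any $z \prec y$ satisfies $z \prec x$ by clause (2), so $z \in s_\prec(x) \subseteq \da A$, whence $s_\prec(y) \subseteq \da A$ and $y \in \uap{A}$. For (6), if $A \in \mathcal{U}(P)$, $x \in \lap{A}$ with witness $y \in A$ satisfying $y \prec x$, and $x \leq z$, then upward closure of $A$ gives $z \in A$, while clause (2) gives $y \prec z$, so $y \in s_\prec(z) \cap A$ and $z \in \lap{A}$.

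There is no real obstacle; the proposition decomposes into a sequence of one-line verifications, each reducing to clause (1) or (2) of the auxiliary relation. The only mild bookkeeping issue is item (4), where the ``$x \in A$'' conjunct in the definition of $\lap{A}$ must be threaded through the iff consistently with the witness clause $s_\prec(x) \cap A \neq \emptyset$; once that is handled, the equivalence is immediate.
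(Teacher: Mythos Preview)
The paper gives no proof of this proposition (it merely announces the items as ``immediate consequences'' of Definition~\ref{defi: lower and upper approx in dom}), so your explicit verifications of (1)--(3), (5), (6) are already more than what the paper supplies, and they are correct: each one reduces, as you say, to clause~(1) or clause~(2) of the definition of an auxiliary relation.

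There is, however, a genuine issue with item~(4) that you have noticed but not resolved. As literally stated (``for any set $A$''), the backward implication is \emph{false}: from $y \in A$ and $y \prec x$ you obtain $s_\prec(x) \cap A \neq \emptyset$, but you cannot recover the conjunct $x \in A$ required by the definition of $\lap{A}$ unless $A$ is an upper set. A concrete counterexample: take $\prec\;=\;\leq$ and any $A$ that is not an upper set; then by part~(2) one has $A^{\da\leq} = A$, whereas the set $\{x \mid \exists y \in A.\ y \leq x\}$ equals $\ua A \neq A$. Your remark that the ``$x \in A$'' conjunct ``must be threaded through the iff'' and that ``once that is handled, the equivalence is immediate'' papers over exactly this gap: there is nothing to thread through, because the implication simply fails for general $A$. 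The correct reading (and the only way the paper ever uses (4)) is with $A$ an upper set, in which case $y \in A$ and $y \leq x$ give $x \in A$ by upward closure; you should state this hypothesis explicitly rather than leave it as ``bookkeeping''.
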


Our choice of definition for the lower and upper approximations is justified in view of the following proposition:
\begin{prop}
\label{prop: fundamental approx condition}
Let $P$ be a poset.  For each $A \subseteq P$ and ${\prec}\in \aux(P)$, we have
\[
\lap{A} \subseteq A \subseteq \uap{A}.
\]
\end{prop}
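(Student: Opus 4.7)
The plan is to verify each of the two inclusions separately, and both should fall out essentially from unwinding definitions together with axiom~(1) of an auxiliary relation (namely, $x \prec y$ implies $x \leq y$).

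For the first inclusion $\lap{A} \subseteq A$, I would simply appeal to Definition~\ref{defi: lower and upper approx in dom}: the defining condition for $x \in \lap{A}$ already contains the clause $x \in A$ as a conjunct, so the inclusion is a tautology. No use of the auxiliary relation axioms is required here.

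For the second inclusion $A \subseteq \uap{A}$, I would fix an arbitrary $a \in A$ and aim to show that $s_\prec(a) \subseteq \da A$, which is precisely what membership in $\uap{A}$ demands. Pick any $y \in s_\prec(a)$, so that $y \prec a$. Invoking axiom~(1) of an auxiliary relation, I get $y \leq a$; since $a \in A$, this yields $y \in \da A$ by definition of the lower closure. As $y$ was arbitrary in $s_\prec(a)$, the containment $s_\prec(a) \subseteq \da A$ follows, and hence $a \in \uap{A}$.

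There is no real obstacle here: the statement is a basic sanity check confirming that the operators $\uap{(-)}$ and $\lap{(-)}$ deserve to be called ``approximations.'' The only subtle point worth flagging is that axiom~(1) of auxiliary relations is genuinely needed for the upper inclusion; otherwise elements of $s_\prec(a)$ need not sit below $a$, and so need not lie in $\da A$ even when $a \in A$. The other two auxiliary-relation axioms (compatibility with $\leq$ and the clause about $\bot$) play no role in this proposition.
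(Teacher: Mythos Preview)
Your proof is correct. The paper itself states this proposition without proof, treating both inclusions as immediate from Definition~\ref{defi: lower and upper approx in dom}; your argument spells out exactly the natural verification, and your observation that only axiom~(1) of an auxiliary relation is needed for the inclusion $A \subseteq \uap{A}$ is accurate.
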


Given a poset $P$ and ${\prec}\in \aux(P)$, we always have
\[
\lap{\emptyset} = \emptyset \text{ and } \uap{P} = P.
\]
It is very tempting to conclude that $\uap{\emptyset} = \emptyset$.  But this is not true in general.
For instance, if $P$ has no bottom element, there may still be an element $x \in P$ such that $s_\prec(x) = \emptyset$.
The following proposition tells us more about this situation.
\begin{prop}
\label{prop: lap of whole space}
Let $P$ be a poset and ${\prec}\in \aux(P)$.  Then the following are equivalent:
\begin{enumerate}
\item For any $x \in P$, the set $s_\prec(x)$ is nonempty.
\item $\uap{\emptyset} = \emptyset$.
\item $\lap{P} = P$.
\end{enumerate}
\end{prop}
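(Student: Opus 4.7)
The plan is to prove the two equivalences (1)$\iff$(2) and (1)$\iff$(3) separately, by unfolding the definitions of $\uap{\cdot}$ and $\lap{\cdot}$ with the specific inputs $\emptyset$ and $P$. There is no real obstacle here; the statement is essentially a matter of rewriting, and the only subtlety is the boundary behaviour of $\da A$ when $A = \emptyset$.

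First I would compute $\uap{\emptyset}$. By Definition~\ref{defi: lower and upper approx in dom}, $\uap{\emptyset} = \{x \in P \mid s_\prec(x) \subseteq \da \emptyset\}$, and since $\da \emptyset = \emptyset$, this simplifies to $\{x \in P \mid s_\prec(x) = \emptyset\}$. Hence $\uap{\emptyset} = \emptyset$ holds if and only if no element $x$ satisfies $s_\prec(x) = \emptyset$, which is exactly condition~(1). This immediately establishes (1)$\iff$(2).

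Next I would compute $\lap{P}$. Using the characterisation in Proposition~\ref{prop: basic uap and lap}(4), or directly from Definition~\ref{defi: lower and upper approx in dom}, we have $\lap{P} = \{x \in P \mid x \in P \text{ and } s_\prec(x) \cap P \neq \emptyset\} = \{x \in P \mid s_\prec(x) \neq \emptyset\}$, because $s_\prec(x) \subseteq P$. Therefore $\lap{P} = P$ holds precisely when $s_\prec(x) \neq \emptyset$ for every $x \in P$, giving (1)$\iff$(3).

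The only step that requires a second of care is remembering that $\da \emptyset = \emptyset$ (so the inclusion $s_\prec(x) \subseteq \da \emptyset$ collapses to an equality with $\emptyset$), but once this is noted both equivalences are purely formal. Combining the two biconditionals yields the threefold equivalence claimed in the proposition.
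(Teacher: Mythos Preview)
Your proof is correct and rests on the same idea as the paper's: both arguments simply unfold Definition~\ref{defi: lower and upper approx in dom} at the extreme inputs $\emptyset$ and $P$. The only structural difference is that the paper proves the cycle (1)$\Rightarrow$(2)$\Rightarrow$(3)$\Rightarrow$(1) via contrapositives, whereas you compute $\uap{\emptyset}=\{x\in P\mid s_\prec(x)=\emptyset\}$ and $\lap{P}=\{x\in P\mid s_\prec(x)\neq\emptyset\}$ explicitly and read off (1)$\iff$(2) and (1)$\iff$(3) directly; this is arguably a touch cleaner but not materially different.
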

\begin{proof}
(1) $\implies$ (2): It is obvious by Definition~\ref{defi: lower and upper approx in dom}.

(2) $\implies$ (3): Suppose that $\lap{P} \neq P$.  Then there exists $x \in P$ such that $x \not \in \lap{P}$.
This means that $s_\prec(x) \cap P = \emptyset$.
Thus, $s_\prec(x) = \emptyset$.  This means that $s_\prec(x) \subseteq \da \emptyset$, and hence $\lap{\emptyset} \neq \emptyset$ because $x \in \lap{\emptyset}$, and thus (2) cannot hold.

(3) $\implies$ (1): Suppose for the sake of contradiction that there is $x \in P$ such that $s_\prec(x) = \emptyset$.
Then we have $s_\prec(x) \cap P = \emptyset$.
This implies that there exists $x \in P$ for which the condition $s_\prec(x) \cap P \neq \emptyset$ fails to hold, i.e., there is $x \in P \backslash \lap{P}$, and thus (3) cannot hold.
\end{proof}

We now turn the lower and upper approximations into set-valued operators.
Let $P$ be a poset, and we denote the collection of upper (respectively, lower) subsets of $P$ by $\mathcal{U}(P)$ (respectively, $\mathcal{L}(P)$.  Recall that $(\mathcal{U}(P),\subseteq)$ and $(\mathcal{L}(P),\subseteq)$ are complete lattices.  In each case, infima are calculated as intersections and suprema as unions.

It is easy to establish the following:

\begin{prop}
\label{cor: uap preserves intersection}
Let $P$ be a poset and ${\prec}\in \aux(P)$.
Then the following hold:
\begin{enumerate}
\item For any collection $\{A_i \mid i \in I\}$ of lower subsets of $P$,
\[
\uap{\left(\bigcap_{i \in I} A_i \right)}
= \bigcap_{i \in I} \uap{A_i}.
\]

\item For any collection $\{B_i \mid i \in I\}$ of upper subsets of $P$,
\[
\lap{\left(\bigcup_{i \in I} B_i \right)}
= \bigcup_{i \in I} \lap{B_i}.
\]
\end{enumerate}
\end{prop}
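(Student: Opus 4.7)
The plan is to reduce both identities to straightforward membership chases, leveraging the closure of $\mathcal{L}(P)$ and $\mathcal{U}(P)$ under the relevant operations so that the lower/upper approximation operators admit simplified descriptions.

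For (1), I would first observe that the intersection of lower sets is again a lower set, so $\bigcap_{i \in I} A_i \in \mathcal{L}(P)$, and likewise $\da A_i = A_i$ and $\da(\bigcap_i A_i) = \bigcap_i A_i$. Then by Definition~\ref{defi: lower and upper approx in dom}, the condition $x \in \uap{(\bigcap_i A_i)}$ reduces to $s_\prec(x) \subseteq \bigcap_i A_i$, and analogously $x \in \uap{A_i}$ reduces to $s_\prec(x) \subseteq A_i$. The standard distributive fact that $s_\prec(x) \subseteq \bigcap_i A_i$ holds iff $s_\prec(x) \subseteq A_i$ for every $i$ then delivers the equality.

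For (2), the dual observation is that a union of upper sets is an upper set, so $\bigcup_i B_i \in \mathcal{U}(P)$. Invoking Proposition~\ref{prop: basic uap and lap}(4) applied to the upper set $\bigcup_i B_i$, membership $x \in \lap{(\bigcup_i B_i)}$ is equivalent to the existence of some $y \in \bigcup_i B_i$ with $y \prec x$. Swapping the existentials over $i$ and $y$ — that is, $\exists y.~(\exists i.~y \in B_i) \wedge y \prec x$ iff $\exists i.~\exists y \in B_i.~y \prec x$ — and applying Proposition~\ref{prop: basic uap and lap}(4) in the opposite direction to each $B_i$ yields $x \in \bigcup_i \lap{B_i}$.

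The main obstacle, if there is one, is purely bookkeeping: one must be careful to notice that the $\da(\cdot)$ in the definition of $\uap{}$ can be dropped only because every $A_i$ (and hence their intersection) is already lower, and likewise that Proposition~\ref{prop: basic uap and lap}(4) is cleanly available here because the $B_i$ are upper, so the side condition $x \in B_i$ built into the definition of $\lap{}$ is automatic once a $\prec$-predecessor in $B_i$ is produced. Beyond these checks the proof is entirely mechanical, and I would present each part as a single chain of ``iff'' equivalences.
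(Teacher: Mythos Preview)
Your proposal is correct and is exactly the kind of routine membership chase the paper has in mind; indeed, the paper omits the proof entirely, merely prefacing the proposition with ``It is easy to establish the following''. Your care in noting why $\da$ can be dropped in part~(1) and why the $x \in B_i$ clause is automatic in part~(2) (because the $B_i$ are upper and $y \prec x$ forces $y \leq x$) is precisely the bookkeeping that makes the argument go through.
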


Considering the mappings $\uap{(-)}: \mathcal{L}(P) \lra \mathcal{L}(P),
~ A \mapsto \uap{A}$ and $\lap{(-)}: \mathcal{U}(P) \lra \mathcal{U}(P),
~ A \mapsto \lap{A}$, they are both very special monotone mappings in that they are part of certain adjunctions. At this juncture, the reader may like to recall that a pair of monotone maps $f:S \lra T$ and $g:T \lra S$ between posets $S$ and $T$ is called an \emph{adjunction} if for each $s \in S$ and $t \in T$, $f(s) \geq t$ if and only if $s \geq g(t)$.  In an adjunction $(f,g)$, the function $f$ is called the upper adjoint and $g$ the lower adjoint.

\begin{lem}\cite{gierzetal03}
\begin{enumerate}
\item Let $f:S \lra T$ be a function between posets of which $S$ is a complete lattice. Then $f$ preserves infs iff $f$ is monotone and has a lower adjoint.
\item Let $g:T \lra S$ be a function between posets of which $T$ is a complete lattice. Then $g$ preserves sups iff $g$ is monotone and has an upper adjoint.
\end{enumerate}
\end{lem}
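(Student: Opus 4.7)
The plan is to prove part (1) in full and then indicate the dual argument for part (2), which is obtained by reversing all orderings.

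For the forward direction of (1), I assume $f: S \lra T$ preserves all infima. Monotonicity is a routine consequence: if $s_1 \leq s_2$, then $s_1 = s_1 \wedge s_2$, so $f(s_1) = f(s_1) \wedge f(s_2) \leq f(s_2)$. The candidate lower adjoint is
\[
g: T \lra S, \qquad g(t) := \bigwedge \{ s \in S \mid f(s) \geq t \}.
\]
The set on the right is nonempty because $f$ preserves the empty infimum, so $f(\top_S) = \top_T \geq t$, meaning $\top_S$ always lies in this set; hence $g(t)$ is defined since $S$ is a complete lattice. To verify the adjunction $f(s) \geq t \iff s \geq g(t)$: if $f(s) \geq t$ then $s$ is in the set defining $g(t)$, so $s \geq g(t)$. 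Conversely, if $s \geq g(t)$, then by monotonicity
\[
f(s) \geq f(g(t)) = \bigwedge \{ f(s') \mid f(s') \geq t \} \geq t,
\]
using preservation of infs in the middle equality.

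For the backward direction, assume $f$ is monotone with lower adjoint $g$. Given any family $\{s_i\}_{i \in I}$ in $S$, monotonicity immediately gives $f(\bigwedge_i s_i) \leq \bigwedge_i f(s_i)$. For the reverse inequality, set $t := \bigwedge_i f(s_i)$; by the adjunction it suffices to show $\bigwedge_i s_i \geq g(t)$, and this in turn reduces to showing $s_i \geq g(t)$ for every $i$, which by the adjunction is equivalent to $f(s_i) \geq t$, true by the very definition of $t$.

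Part (2) follows by dualising: given $g$ preserving sups, the candidate upper adjoint is $f(s) := \bigvee \{ t \in T \mid g(t) \leq s \}$, and the verification mirrors the above with $\leq$ and $\geq$ swapped and $\bigwedge, \bigvee$ interchanged; completeness of $T$ is used exactly as completeness of $S$ was used in (1). There is no real obstacle here — this is the standard poset-level Adjoint Functor Theorem — but the one place that requires a bit of care is ensuring that the defining set for the adjoint is nonempty, which is where preservation of the empty (co)infimum, together with completeness of the source lattice, enters crucially.
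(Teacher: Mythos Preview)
Your proof is correct and is exactly the standard order-theoretic Adjoint Functor Theorem argument. The paper does not supply its own proof of this lemma: it is simply quoted from~\cite{gierzetal03}, so there is no in-paper argument to compare against.

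One small presentational remark on your backward direction: you write ``set $t := \bigwedge_i f(s_i)$'' before you have established that this infimum exists in $T$ (and $T$ is only assumed to be a poset). The clean fix is to let $t$ be an \emph{arbitrary} lower bound of $\{f(s_i)\}_{i \in I}$ in $T$; your adjunction argument then gives $t \leq f\left(\bigwedge_i s_i\right)$, and since monotonicity already makes $f\left(\bigwedge_i s_i\right)$ a lower bound of the family, this simultaneously proves existence of $\bigwedge_i f(s_i)$ and the desired equality. This is not a genuine gap, only a slight circularity in the exposition.
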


\begin{cor}
\label{thm: uap is an upper adjoint}\leavevmode
\begin{enumerate}
\item The monotone map
\[
\uap{(-)}: \mathcal{L}(P) \lra \mathcal{L}(P),
~ A \mapsto \uap{A}
\]
has a lower adjoint.
\item The monotone mapping
\[
\lap{(-)}: \mathcal{U}(P) \lra \mathcal{U}(P),
~ A \mapsto \lap{A}
\]
has an upper adjoint.
\end{enumerate}
\end{cor}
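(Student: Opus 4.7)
The plan is to invoke the Lemma immediately preceding the corollary, which characterizes the existence of adjoints between complete lattices in terms of preservation of infima or suprema. Both $\mathcal{L}(P)$ and $\mathcal{U}(P)$ are complete lattices whose infima are intersections and whose suprema are unions, so Proposition~\ref{cor: uap preserves intersection} is precisely the preservation hypothesis needed. The whole argument is then a matter of checking that the maps go where they claim to go and are monotone, after which the Lemma does all the work.

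For part (1), I would first verify that $\uap{(-)}\colon \mathcal{L}(P)\lra\mathcal{L}(P)$ is well-defined: by Proposition~\ref{prop: basic uap and lap}(5), $\uap{A}\in\mathcal{L}(P)$ for every $A\subseteq P$. Next I would observe that it is monotone: if $A\subseteq B$ with $A,B$ lower, then $\da A=A\subseteq B=\da B$, hence every $x$ with $s_\prec(x)\subseteq\da A$ also satisfies $s_\prec(x)\subseteq\da B$, so $\uap{A}\subseteq\uap{B}$. Proposition~\ref{cor: uap preserves intersection}(1) then says that $\uap{(-)}$ preserves arbitrary intersections, which are exactly the infima of $\mathcal{L}(P)$. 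Applying part~(1) of the Lemma with $S=T=\mathcal{L}(P)$ yields the existence of a lower adjoint.

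For part (2), the reasoning is entirely dual. I would first invoke Proposition~\ref{prop: basic uap and lap}(6) to see that $\lap{A}\in\mathcal{U}(P)$ whenever $A\in\mathcal{U}(P)$, so $\lap{(-)}\colon\mathcal{U}(P)\lra\mathcal{U}(P)$ is well-defined. For monotonicity, if $A\subseteq B$ are upper and $x\in\lap{A}$, then $x\in A\subseteq B$, and $\emptyset\neq s_\prec(x)\cap A\subseteq s_\prec(x)\cap B$, so $x\in\lap{B}$. Proposition~\ref{cor: uap preserves intersection}(2) gives preservation of arbitrary unions, i.e.\ of all suprema in $\mathcal{U}(P)$, and part~(2) of the Lemma (with $T=\mathcal{U}(P)$ as the complete lattice whose suprema are preserved) furnishes an upper adjoint.

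I do not anticipate a substantive obstacle here; once Proposition~\ref{cor: uap preserves intersection} and the Lemma are available, the corollary is essentially bookkeeping. The only point that requires a little care is not to confuse the ambient powerset with the restricted domains $\mathcal{L}(P)$ and $\mathcal{U}(P)$ — which is precisely why the membership statements in Proposition~\ref{prop: basic uap and lap}(5)--(6) have to be cited explicitly, so that $\uap{(-)}$ and $\lap{(-)}$ are genuinely endomaps of the complete lattices to which the adjoint characterization applies.
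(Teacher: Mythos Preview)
Your proposal is correct and is exactly the argument the paper intends: the corollary is stated without explicit proof, immediately after the adjoint-characterization Lemma and Proposition~\ref{cor: uap preserves intersection}, and your write-up simply spells out how these combine (together with the well-definedness checks from Proposition~\ref{prop: basic uap and lap}(5)--(6)). There is nothing to add.
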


The upper and lower approximations are closely connected to each other.

Let $P$ be a poset and $\prec$ an auxiliary relation on $P$.  For any $A \subseteq P$, we have:
\[
\begin{split}
      & x \in \lap{A} \\
 \iff & (x \in A) \land (s_\prec(x) \cap A \neq \emptyset) \\
 \iff & (x \in A) \land \neg(s_\prec(x) \cap A = \emptyset) \\
 \iff & (x \in A) \land \neg(s_\prec(x) \subseteq P\backslash A).
\end{split}
\]
Observe that $s_\prec(x) \subseteq P \backslash A$ always implies $x \in \uap{(P \backslash A)}$ because we always have $(P \backslash A) \subseteq \da (P \backslash A)$.
Thus, contrapositivity forces that
\[
\neg(x \in \uap{(P \backslash A)}) \implies
\neg(s_\prec(x) \subseteq P \backslash A)
\]
which in turn entails that
\[
(x \in A) \land \neg(x \in \uap{(P \backslash A)})
\implies
(x \in A) \land \neg(s_\prec(x) \subseteq P \backslash A),
\]
i.e.,
\[
A \cap (P \backslash \uap{(P \backslash A)}) \subseteq \lap{A}.
\]

In the event that $A$ is an upper set, then $P \backslash A$ is lower so that $\da (P \backslash A) = P \backslash A$.
Thus, we have
\[
\lap{A} = A \cap (P \backslash \uap{(P \backslash A)})
\]
if $A$ is an upper set.

The above discussion establishes the following partition property:
\begin{thm}
\label{thm: uap and lap determines each other}
Let $P$ be a poset, ${\prec}\in \aux(P)$ and $A \subseteq P$.
Then the following hold:
\begin{enumerate}
\item $\lap{A} \cup \uap{(P \backslash A)} = P$.
\item If in addition $A$ is an upper set of $P$, then
\[
\lap{A} \cap \uap{(P \backslash A)} = \emptyset.
\]
That is, $\lap{A}$ and $\uap{(P \backslash A)}$ define a partition on $P$.
\end{enumerate}
\end{thm}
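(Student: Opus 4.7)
The plan is to use the informal computation that already appears in the paragraph immediately preceding the theorem statement, where it is shown that $A \cap (P \setminus \uap{(P \setminus A)}) \subseteq \lap{A}$ and, under the additional hypothesis that $A$ is an upper set, that $\lap{A} = A \cap (P \setminus \uap{(P \setminus A)})$. The theorem merely repackages these two facts as a partition property, so essentially no new ideas are needed; the main task is to verify the covering part (1), which the preceding discussion did \emph{not} explicitly establish.

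For (1), I would fix $x \in P$ and argue by cases, on the assumption $x \notin \lap{A}$, that $x \in \uap{(P \setminus A)}$. Unfolding the definition of $\lap{A}$, the hypothesis $x \notin \lap{A}$ yields either (a) $x \notin A$, or (b) $s_\prec(x) \cap A = \emptyset$. In case (a), I observe that for any $y \in s_\prec(x)$ we have $y \leq x$ by the defining property of an auxiliary relation, and $x \in P \setminus A$, so $y \in \da(P \setminus A)$; hence $s_\prec(x) \subseteq \da(P \setminus A)$, which is exactly what membership in $\uap{(P \setminus A)}$ requires. In case (b), $s_\prec(x) \subseteq P \setminus A \subseteq \da(P \setminus A)$ directly, and we again conclude $x \in \uap{(P \setminus A)}$.

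For (2), the hypothesis $A \in \mathcal{U}(P)$ makes $P \setminus A$ a lower subset, so $\da(P \setminus A) = P \setminus A$. I would then argue directly: if $x \in \lap{A} \cap \uap{(P \setminus A)}$, the first conjunct gives $x \in A$ and $s_\prec(x) \cap A \neq \emptyset$, while the second gives $s_\prec(x) \subseteq \da(P \setminus A) = P \setminus A$, contradicting $s_\prec(x) \cap A \neq \emptyset$. Equivalently, one may simply invoke the equality $\lap{A} = A \cap (P \setminus \uap{(P \setminus A)})$ derived just before the theorem, from which emptiness of the intersection is immediate.

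There is no serious obstacle here: the only subtle point is remembering, in case (a) of part (1), to use the monotonicity clause $x \prec y \Rightarrow x \leq y$ of an auxiliary relation in order to drop a single element of $s_\prec(x)$ below $x$ and thereby land in $\da(P \setminus A)$ without needing $A$ to be upper. This is precisely why part (1) is unconditional while part (2) requires $A \in \mathcal{U}(P)$.
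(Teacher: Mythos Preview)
Your proposal is correct and follows essentially the same approach as the paper, which treats the preceding discussion as the proof. The only difference is that you make the covering argument for part (1) fully explicit via a case split, whereas the paper leaves the step ``$x \notin A \Rightarrow x \in \uap{(P\setminus A)}$'' implicit (it follows at once from the general fact $B \subseteq \uap{B}$ applied to $B = P\setminus A$, i.e., Proposition~\ref{prop: fundamental approx condition}); your case (a) simply reproves this instance directly.
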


One advantage of our operators is that it allows us to characterize the interpolating property of an auxiliary relation $\prec$.  More precisely, an auxiliary relation is interpolating if and only if the upper (respectively, lower) approximation operator is a closure (respectively, kernel) operator on the lattice of lower (respectively, upper) sets. Here, a \emph{closure} (respectively, \emph{kernel}) \emph{operator} refers to an idempotent, monotone self map $f$ on a poset $L$ with $1_L \leq f$ (respectively, $f \leq 1_L$).

\begin{thm}
\label{thm: char of INT}
Let P be a poset and ${\prec}\in \aux(P)$.
Then the following are equivalent:
\begin{enumerate}
\item $\prec$ satisfies (INT).

\item $\lap{(\lap{A})} = \lap{A}$ for every upper set $A$ of $P$.
\item The lower approximation operator is a kernel operator on the lattice, ${(\mathcal{U}(P),\subseteq)}$, of upper sets of $P$.
\item $\uap{(\uap{B})} = \uap{B}$ for every lower set $B$ of $P$.
\item The upper approximation operator is a closure operator on the lattice, $(\mathcal{L}(P),\subseteq)$, of lower sets of $P$.
\end{enumerate}
\end{thm}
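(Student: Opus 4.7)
The plan is to prove the cycle $(1) \Rightarrow (2) \Rightarrow (3) \Rightarrow (1)$ on the lower-approximation side, then derive the equivalences $(2) \Leftrightarrow (4)$ and $(4) \Leftrightarrow (5)$ by dualizing through the partition identity of Theorem~\ref{thm: uap and lap determines each other}, which reads $\lap{A} = P \setminus \uap{(P \setminus A)}$ whenever $A$ is upper.

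For $(1) \Rightarrow (2)$ with $A$ upper, the inclusion $\lap{(\lap{A})} \subseteq \lap{A}$ is Proposition~\ref{prop: fundamental approx condition}. For the reverse, pick $x \in \lap{A}$; by Proposition~\ref{prop: basic uap and lap}(4) there is $y \in A$ with $y \prec x$, and (INT) supplies some $z$ with $y \prec z \prec x$. Since $A$ is upper and $y \leq z$, we have $z \in A$, and the witness $y \in s_\prec(z) \cap A$ places $z \in \lap{A}$; thus $z \in s_\prec(x) \cap \lap{A}$ and $x \in \lap{(\lap{A})}$. The step $(2) \Leftrightarrow (3)$ is routine bookkeeping: $\lap{(-)}$ preserves upperness by Proposition~\ref{prop: basic uap and lap}(6), is deflationary by Proposition~\ref{prop: fundamental approx condition}, and is monotone directly from the definition, so the only substantive ingredient of the kernel-operator property is idempotence, which is exactly (2). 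The delicate step is $(3) \Rightarrow (1)$: given $x \prec z$, I test idempotence on the principal filter $\ua x$. By Proposition~\ref{prop: basic uap and lap}(3), $\lap{(\ua x)} = \{w \in P \mid x \prec w\}$, so $z \in \lap{(\ua x)}$; idempotence then forces $z \in \lap{(\lap{(\ua x)})}$, which unpacks to some $y \in s_\prec(z) \cap \lap{(\ua x)}$, that is, $x \prec y \prec z$.

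To connect the two sides, let $B$ be a lower set and set $A := P \setminus B$, which is upper. The partition identity gives $\uap{B} = P \setminus \lap{A}$. Since $\lap{A}$ is again upper by Proposition~\ref{prop: basic uap and lap}(6), a second application of the identity (with $\lap{A}$ in place of $A$) yields $\uap{(\uap{B})} = P \setminus \lap{(\lap{A})}$. Under (2) the right-hand side collapses to $P \setminus \lap{A} = \uap{B}$, which is (4); the converse runs symmetrically, starting from an upper set $A$ and setting $B := P \setminus A$. The equivalence $(4) \Leftrightarrow (5)$ then mirrors $(2) \Leftrightarrow (3)$, with \emph{closure} replacing \emph{kernel} because $\uap{(-)}$ is inflationary on $\mathcal{L}(P)$ rather than deflationary.

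The main obstacle is the implication $(3) \Rightarrow (1)$: the abstract idempotence condition must be converted into the concrete existence of interpolants, and the whole argument hinges on recognizing the right test set. The choice $A = \ua x$ is exactly calibrated so that Proposition~\ref{prop: basic uap and lap}(3) identifies $\lap{A}$ with the $\prec$-upper cone of $x$, thereby converting a single application of idempotence into the required interpolation. Everything else in the theorem is either a direct unfolding of definitions, an appeal to the basic properties in Propositions~\ref{prop: basic uap and lap}--\ref{prop: fundamental approx condition}, or a complement-symmetry argument based on Theorem~\ref{thm: uap and lap determines each other}.
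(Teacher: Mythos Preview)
Your proof is correct and follows essentially the same approach as the paper. The paper proves the cycle $(1)\Rightarrow(2)\Rightarrow(1)$ directly and dispatches $(2)\Leftrightarrow(4)$ by invoking Theorem~\ref{thm: uap and lap determines each other} twice, while dismissing $(2)\Leftrightarrow(3)$ and $(4)\Leftrightarrow(5)$ as not requiring proof; you route through $(3)$ and spell out the complementation argument explicitly, but the key steps---the use of (INT) to push $x\in\lap{A}$ into $\lap{(\lap{A})}$, and the test on the principal filter $\ua x$ to extract an interpolant from idempotence---are identical in substance.
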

\begin{proof}
We only need to prove the equivalence of (1), (2) and (4).

(1) $\implies$ (2): Let $A \in \mathcal{U}(P)$.  By virtue of Proposition~\ref{prop: fundamental approx condition}, it suffices to show that $\lap{A} \subseteq \lap{(\lap{A})}$.  To this end, let $x \in \lap{A}$. Then there exists $y \in A$ such that $y \prec x$.
By the (INT) property, there exists $z \in P$ such that $y \prec z \prec x$.  Since $A$ is upper and $y \in A$, we have $z \in A$.
Hence $y \in \lap{A}$, and this implies $x \in \lap{(\lap{A})}$.

(2) $\implies$ (1): Let $x$ and $y$ be elements of $P$ such that $x \prec y$.  By Proposition~\ref{prop: basic uap and lap}(3), it follows that $y \in \lap{(\ua x)}$.  Note that $\ua x \in \mathcal{U}(P)$ so that by (2), we have that $\lap{(\ua x)} = \lap{(\lap{(\ua x)})}$.  So, $y \in \lap{(\lap{(\ua x)})}$.  Thus, $s_\prec(y) \cap \lap{(\ua x)} \neq \emptyset$, and so there exists $z \prec y$ with $z \in \lap{(\ua x)}$.  Then there is $w \prec z$ such that $w \in \ua x$.  Consequently, we have $x \leq w \prec z \prec y$ which implies that $x \prec z \prec y$ for some $z \in P$.  Thus, $\prec$ satisfies the (INT) property.

(2) $\iff$ (4): Immediate result of two applications of Theorem~\ref{thm: uap and lap determines each other}.
\end{proof}


On a poset $P$, the set $\aux(P)$ of auxiliary relations on $P$ is a complete lattice relative to the containment of graphs as subsets of $P \times P$.  We now develop certain basic properties of approximation operators arising from different auxiliary relations on $P$.

\begin{prop}
\label{prop: order in aux affect lap and uap}
Let $P$ be a poset and $\prec_1 \subseteq \prec_2$ in $\aux(P)$.
Then the following containments hold for any subset $A$ of $P$:
\[
\lapa{A}{1} \subseteq \lapa{A}{2} \text{ and }
\uapa{A}{1} \supseteq \uapa{A}{2}.
\]
\end{prop}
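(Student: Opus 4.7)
The plan is to reduce the statement to the pointwise behavior of the sets $s_\prec(x)$, and then just unwind the definitions of $\lap{(-)}$ and $\uap{(-)}$. Concretely, the first step is to observe that the assumption $\prec_1 \subseteq \prec_2$ (viewed as containment of graphs in $P\times P$) immediately yields $s_{\prec_1}(x) \subseteq s_{\prec_2}(x)$ for every $x \in P$, since $y \in s_{\prec_1}(x)$ means $y \prec_1 x$, which by hypothesis implies $y \prec_2 x$, i.e., $y \in s_{\prec_2}(x)$.

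With this in hand, the containment $\lapa{A}{1} \subseteq \lapa{A}{2}$ follows by picking $x \in \lapa{A}{1}$: by Definition~\ref{defi: lower and upper approx in dom}, we get $x \in A$ together with $s_{\prec_1}(x) \cap A \neq \emptyset$; enlarging the first factor via $s_{\prec_1}(x) \subseteq s_{\prec_2}(x)$ preserves non-emptiness of the intersection, so $s_{\prec_2}(x) \cap A \neq \emptyset$ and hence $x \in \lapa{A}{2}$. Dually, the containment $\uapa{A}{2} \subseteq \uapa{A}{1}$ is obtained by taking $x \in \uapa{A}{2}$, so that $s_{\prec_2}(x) \subseteq \da A$, and then shrinking to the smaller set $s_{\prec_1}(x) \subseteq s_{\prec_2}(x) \subseteq \da A$, which places $x$ in $\uapa{A}{1}$.

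There is no real obstacle here: the assertion is essentially a monotonicity/antitonicity statement in the auxiliary-relation argument, and both inclusions are one-line consequences of the definitions once $s_{\prec_1}(x) \subseteq s_{\prec_2}(x)$ is noted. The only thing to be mindful of is the direction reversal in the second inclusion, which is forced by the universal quantifier hidden in the condition $s_\prec(x) \subseteq \da A$ defining $\uap{A}$, as opposed to the existential quantifier hidden in $s_\prec(x) \cap A \neq \emptyset$ defining $\lap{A}$.
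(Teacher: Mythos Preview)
Your argument is correct and is exactly the obvious unwinding of Definition~\ref{defi: lower and upper approx in dom} via the inclusion $s_{\prec_1}(x)\subseteq s_{\prec_2}(x)$. The paper states this proposition without proof, so your direct verification is precisely what is implicitly intended.
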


\begin{remark}\leavevmode
\begin{enumerate}
\item In view of Proposition~\ref{prop: fundamental approx condition} that
\[
\lap{A} \subseteq A \subseteq \uap{A},
\]
as $\prec$ increases in $\aux(P)$, the upper and lower approximations become closer to $A$, i.e., they give a better estimation of $A$.
In the extreme case when $\leq$, the largest element of $\aux(P)$ is taken, we have $A^{\da \leq} = A$ and $A^{\ua \leq} = \da A$.

\item Let $A$ be a fixed subset of $P$.  Then we can talk about the aforementioned effect of auxiliary relations in $\aux(P)$ on the corresponding approximation operators by viewing this cause-and-effect phenomenon as monotone mappings.  More precisely, the mappings defined below
\[
u: (\aux(P),\subseteq) \lra (\mathcal{P}(P),\supseteq),~
\prec \mapsto \uap{A}
\]
and
\[
l: (\aux(P),\subseteq) \lra (\mathcal{P}(P),\subseteq),~
\prec \mapsto \lap{A}.
\]
are monotone mappings between the complete lattices $(\aux(P),\subseteq)$ and $(\mathcal{P}(P),\supseteq)$ (respectively, $(\mathcal{P}(P),\subseteq)$).
\end{enumerate}
\end{remark}

Indeed, the monotone mappings $u$ and $l$, defined above in the remark, are lattice homomorphisms in the following sense:
\begin{prop}
Let $P$ be a poset, $A \subseteq P$ and $\prec_1, \prec_2 \in \aux(P)$.  Then the following hold:
\begin{enumerate}
\item $\uapa{A}{1} \cap \uapa{A}{2} = A^{\ua (\prec_1 \cup \prec_2)}$.

\item $\lapa{A}{1} \cup \lapa{A}{2} = A^{\da (\prec_1 \cup \prec_2)}$.

\item If $A$ is filtered, then $\lapa{A}{1} \cap \lapa{A}{2} = A^{\da (\prec_1 \cap \prec_2)}$.
\end{enumerate}
\end{prop}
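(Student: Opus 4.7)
The plan is to reduce the three identities to short set-theoretic manipulations of the sets $s_\prec(x) = \{y \in P \mid y \prec x\}$, exploiting the two immediate identities $s_{\prec_1 \cup \prec_2}(x) = s_{\prec_1}(x) \cup s_{\prec_2}(x)$ and $s_{\prec_1 \cap \prec_2}(x) = s_{\prec_1}(x) \cap s_{\prec_2}(x)$, together with the fact (recorded earlier) that $\aux(P)$ is closed under binary unions and intersections, so that the relations appearing on the right-hand sides are themselves auxiliary.

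For Part~(1), I would unfold the definition of the upper approximation directly: $x \in \uapa{A}{1} \cap \uapa{A}{2}$ iff $s_{\prec_1}(x) \subseteq \da A$ and $s_{\prec_2}(x) \subseteq \da A$, iff $s_{\prec_1}(x) \cup s_{\prec_2}(x) \subseteq \da A$, which by the first identity above is exactly $x \in A^{\ua (\prec_1 \cup \prec_2)}$. Part~(2) is similarly mechanical: unfolding the definition of the lower approximation and distributing intersection with $A$ over $s_{\prec_1}(x) \cup s_{\prec_2}(x)$ yields the claim. Neither part needs any hypothesis on $A$.

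Part~(3) is the only place where anything beyond bookkeeping is needed, and it is the reason for the filteredness assumption. The inclusion $A^{\da (\prec_1 \cap \prec_2)} \subseteq \lapa{A}{1} \cap \lapa{A}{2}$ is immediate, since a witness $y \in A$ with $y \prec_1 x$ and $y \prec_2 x$ simultaneously witnesses both memberships on the right. For the converse, I would start from witnesses $y_1 \in s_{\prec_1}(x) \cap A$ and $y_2 \in s_{\prec_2}(x) \cap A$, use filteredness of $A$ to pick $z \in A$ with $z \leq y_1$ and $z \leq y_2$, and then invoke clause~(2) of the auxiliary-relation axioms ($u \leq x \prec y \leq z$ implies $u \prec z$) applied to the chains $z \leq y_1 \prec_1 x \leq x$ and $z \leq y_2 \prec_2 x \leq x$ to upgrade these to $z \prec_1 x$ and $z \prec_2 x$ respectively; then $z \in s_{\prec_1 \cap \prec_2}(x) \cap A$ as required. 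The main obstacle worth naming is simply recognizing that filteredness of $A$ together with the downward-closure clause of the auxiliary-relation axioms are exactly the two ingredients needed to merge the separate witnesses $y_1$ and $y_2$ into a single common one; without either, the inclusion can fail.
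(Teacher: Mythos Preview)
Your proposal is correct and follows essentially the same approach as the paper: both reduce (1) and (2) to the identities $s_{\prec_1 \cup \prec_2}(x) = s_{\prec_1}(x) \cup s_{\prec_2}(x)$ and $s_{\prec_1 \cap \prec_2}(x) = s_{\prec_1}(x) \cap s_{\prec_2}(x)$, and for (3) both pick witnesses $y_1,y_2 \in A$, use filteredness to find a common lower bound $z \in A$, and then apply the downward-closure axiom of auxiliary relations to obtain $z \prec_1 x$ and $z \prec_2 x$. Your write-up is in fact slightly more explicit than the paper's, as you spell out the easy inclusion in (3) and name the relevant axiom.
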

\begin{proof}
Notice that for each $x \in P$, we always have:
\[
s_{\prec_1 \cup \prec_2}(x) = s_{\prec_1}(x) \cup s_{\prec_2}(x)
\]
and
\[
s_{\prec_1 \cap \prec_2}(x) = s_{\prec_1}(x) \cap s_{\prec_2}(x).
\]
Thus, (1) and (2) follow immediately.

For (3), if $x \in \lapa{A}{1} \cap \lapa{A}{2}$, then there are elements $y_1,~y_2 \in A$ such that $y_1 \prec_1 x$ and $y_2 \prec_2 x$.  Since $A$ is filtered, there exists $z \in A$ such that $z \leq y_1$ and $z \leq y_2$.  Thus, $z \prec_1 x$ and $z \prec_2 x$, and thus, $x \in A^{\da (\prec_1 \cap \prec_2)}$.
\end{proof}

\subsection{Approximating relations}
\label{subsec: approx operators on approx rel}
In this subsection, we make use of the operators introduced in Definition~\ref{defi: lower and upper approx in dom} in the situation when the auxiliary relation is an approximating one.  In particular, we study the relationships between the approximation operators derived from an approximating auxiliary relation and the Scott closure and interior operators.  Crucially, we rely on a new topology defined from the given approximating auxiliary relation.  Consequently, we obtain a novel characterization of the continuity of posets.

Let us begin with an interesting observation.
\begin{prop} \label{prop: ll open sets}
Let $P$ be a poset and $U \subseteq P$.
If $U$ is upper and $U = U^{\da \ll}$, then $U$ is Scott open.
\end{prop}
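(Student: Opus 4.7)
The plan is to unwind the hypothesis $U = U^{\da \ll}$ via Definition~\ref{defi: lower and upper approx in dom} (specialised to $\prec \,=\, \ll$) and then verify the two defining conditions of a Scott open set directly. Condition~(1) is free: we are told $U$ is an upper set. So the whole argument reduces to establishing inaccessibility by directed suprema.

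First I would rewrite $U^{\da \ll}$ using Proposition~\ref{prop: basic uap and lap}(4): $x \in U^{\da \ll}$ iff $x \in U$ and there is some $y \in U$ with $y \ll x$. Consequently, the assumption $U = U^{\da \ll}$ is just the statement
\[
\forall x \in U.~ \exists y \in U.~ y \ll x.
\]
This is the key reformulation that makes the hypothesis usable.

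Next I would take an arbitrary directed $D \subseteq P$ whose supremum $z := \bigsup D$ exists and lies in $U$. Applying the reformulated hypothesis at $x = z$, I obtain $y \in U$ with $y \ll z$. Invoking the definition of the way-below relation with the directed set $D$ (noting $\bigsup D = z \geq z$), there exists $d \in D$ with $y \leq d$. Since $U$ is an upper set and $y \in U$, we conclude $d \in U$, which is the required inaccessibility condition.

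There is no real obstacle here; the statement is essentially a translation exercise between the lower approximation with respect to $\ll$ and the two clauses of Scott openness. The only point requiring a moment of care is making sure that the definition of $\ll$ applies to the given directed $D$, which it does because $\bigsup D$ is assumed to exist.
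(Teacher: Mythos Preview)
Your proof is correct and follows essentially the same argument as the paper's: both reduce to the inaccessibility condition, use $U = U^{\da \ll}$ to extract a way-below witness in $U$ for $\bigsup D$, and then apply the definition of $\ll$ together with $U$ being upper to land some $d \in D$ in $U$. The only difference is cosmetic---you spell out the reformulation of the hypothesis more explicitly before using it.
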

\begin{proof}
Since $U$ is upper, it suffices to show that $U$ is inaccessible by existing directed suprema. To this end, let $D$ be a directed subset whose supremum, $\bigsup D$, exists and belongs to $U$.  Because $U = U^{\da \ll}$, there exists $x \in U$ such that already $x \ll \bigsup D$.
This implies that there is $d \in D$ such that $d \geq x$.  Since $U$ is upper, $d \in U$.
\end{proof}

This observation leads us to consider, for the case of an arbitrary ${\prec}\in \aux(P)$, those upper subsets $U$ for which $U = \lap{U}$.  In general, these subsets do not form a topology on $P$ but they do when $\prec$ is pre-approximating.

\begin{defi}
Let $P$ be a poset and ${\prec}\in \papp(P)$. A subset $U$ of $P$ is said to be \emph{$\prec$-open} if it satisfies the following two conditions:
\begin{enumerate}
\item $U = \ua U$;
\item $U = \lap{U}$.
\end{enumerate}
\end{defi}

\begin{thm} \label{thm: new topology generated by lap}
Let $P$ be a poset and ${\prec}\in \papp(P)$.
Then the collection of $\prec$-open subsets defines a topology on $P$.
\end{thm}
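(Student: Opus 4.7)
The plan is to verify the three standard axioms for being a topology: that $\emptyset$ and $P$ are $\prec$-open, that arbitrary unions of $\prec$-open sets are $\prec$-open, and that binary (hence finite) intersections of $\prec$-open sets are $\prec$-open. Throughout, upper-closure is preserved under arbitrary unions and intersections, so the real work concerns the condition $U = \lap{U}$.

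First, $\emptyset$ is trivially $\prec$-open since $\lap{\emptyset} = \emptyset$. For $P$, I observe that because $\prec$ is pre-approximating, $s_\prec(x)$ is directed, hence nonempty, for every $x \in P$. Therefore Proposition~\ref{prop: lap of whole space} gives $\lap{P} = P$, so $P$ is $\prec$-open.

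Next, for arbitrary unions, suppose $\{U_i\}_{i \in I}$ is a family of $\prec$-open sets. Then $\bigcup_i U_i$ is upper as a union of upper sets. Since each $U_i$ is upper, Proposition~\ref{cor: uap preserves intersection}(2) yields $\lap{\bigl(\bigcup_i U_i\bigr)} = \bigcup_i \lap{U_i} = \bigcup_i U_i$, so the union is $\prec$-open.

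The main obstacle, and the only genuinely non-formal step, is closure under binary intersections; this is where the pre-approximating hypothesis becomes essential, since in general $\lap{(-)}$ does not distribute over intersections. Let $U, V$ be $\prec$-open; then $U \cap V$ is upper, and by Proposition~\ref{prop: fundamental approx condition} we have $\lap{(U \cap V)} \subseteq U \cap V$. For the reverse inclusion, pick $x \in U \cap V$. From $U = \lap{U}$ and $V = \lap{V}$, there exist $y \in U$ with $y \prec x$ and $z \in V$ with $z \prec x$; so $y, z \in s_\prec(x)$. Since $\prec$ is pre-approximating, $s_\prec(x)$ is directed, so there is some $w \in s_\prec(x)$ with $y \leq w$ and $z \leq w$. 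Because $U$ and $V$ are upper sets containing $y$ and $z$ respectively, $w \in U \cap V$, and by construction $w \prec x$. Hence $s_\prec(x) \cap (U \cap V) \neq \emptyset$ and $x \in U \cap V$, so $x \in \lap{(U \cap V)}$. This establishes $U \cap V = \lap{(U \cap V)}$, completing the proof.
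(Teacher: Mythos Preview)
Your proof is correct and follows essentially the same approach as the paper's own proof: both handle $\emptyset$ and $P$ via Proposition~\ref{prop: lap of whole space}, arbitrary unions via Proposition~\ref{cor: uap preserves intersection}(2), and binary intersections by using directedness of $s_\prec(x)$ to find a common upper bound of the two witnesses, which then lies in $U\cap V$ by upperness. The arguments are identical in substance and structure.
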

\begin{proof}
Recall that we have earlier remarked that $\lap{\emptyset} = \emptyset$.  Since for each $x \in P$, the set $\{y \in P \mid y \prec x\}$ is directed and hence non-empty, by Proposition~\ref{prop: lap of whole space}(3), it follows that $\lap{P} = P$.  Thus, $\emptyset$ and $P$ are $\prec$-open sets.

Let $\{U_i \mid i \in I\}$ be any collection of $\prec$-opens.  We want to show that
$\bigcup_{i \in I} U_i$ is $\prec$-open.  For each $i \in I$, $U_i = \lap{U_i}$.  So we aim to show that $\bigcup_{i \in I} \lap{U_i}$ is $\prec$-open.  But by Theorem~\ref{cor: uap preserves intersection}(2), $\bigcup_{i \in I} \lap{U_i} = \lap{\left(\bigcup_{i \in I} U_i \right)}$.  Thus, $\bigcup_{i \in I} \lap{U_i} = \lap{\left(\bigcup_{i \in I} \lap{U_i}\right)}$ which completes the argument that $\bigcup_{i \in I} U_i$ is $\prec$-open.

Let $U_1$ and $U_2$ be $\prec$-opens.  We now prove that $U_1 \cap U_2$ is again $\prec$-open. It suffices to show that $\lap{(U_1 \cap U_2)} = U_1 \cap U_2$ since we have $U_1 = \lap{U_1}$ and $U_2 = \lap{U_2}$.  To achieve this, we rely on Proposition~\ref{prop: basic uap and lap}(4).  Let $x \in U_1 \cap U_2$.  This implies that $x \in U_1 = \lap{U_1}$ and $x \in U_2 = \lap{U_2}$.  Thus, by Proposition~\ref{prop: basic uap and lap}(4), there exist $y_1 \in U_1$ and $y_2 \in U_2$ such that $y_1 \prec x$ and $y_2 \prec x$.
Thus, $y_1$ and $y_2 \in s_\prec(x)$.  Since $s_\prec(x)$ is directed by assumption, there exists $y \in s_\prec(x)$ such that $y_1,~y_2 \leq y$.
But $U_1$ and $U_2$ are upper, and thus, $y \in U_1 \cap U_2$ with $y \prec x$.  By Proposition~\ref{prop: basic uap and lap}(4), it follows that $\lap{(U_1 \cap U_2)} = U_1 \cap U_2$.
\end{proof}

\begin{defi}[Topology generated by lower approximation] \label{defi: prec-topology}
Let $P$ be a poset and ${\prec}\in \papp(P)$.
We denote the collection of $\prec$-opens of $P$ by $\mu^{\prec} (P)$.  We call $\mu^\prec(P)$ the \emph{topology generated by lower approximation} or the \emph{$\mu^\prec$-topology} and the corresponding topological space, $P_{\mu^\prec} := (P,\mu^\prec(P))$, the \emph{lower approximation space}\footnote{The same nomenclature is used, with a different meaning from ours, by some authors in the literature on rough set theory, e.g., see~\cite{malikmordeson02}.}.  Whenever, we mention the $\prec$-opens of $P$, we implicitly require that $\prec$ is a pre-approximating auxiliary relation on $P$.
\end{defi}

Proposition~\ref{prop: ll open sets} already hints at a close relationship between the $\mu^\prec$-topology and the Scott topology.  In our ensuing development, we carry out our investigation in this direction.

Indeed, a $\prec$-open set bears some semblance to a Scott open set in the following sense:
\begin{prop}
\label{prop: inaccessible by joins of sprec(x)}
Let $P$ be a poset, $U \subseteq P$ and ${\prec}\in \papp(P)$.
Consider the following statements:
\begin{enumerate}
\item $U \in \mu^\prec(P)$.
\item $U$ is upper and inaccessible by existing suprema of directed sets of the form $s_\prec(x)$.
\end{enumerate}
Then (1) $\implies$ (2). \\
If, in addition ${\prec}\in \app(P)$, then (2) $\implies$ (1).
\end{prop}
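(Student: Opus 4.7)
The plan is to handle the two implications separately; both follow by unwinding definitions, with the only subtlety being why the second direction requires the full approximating hypothesis. Throughout, I use that the pre-approximating condition makes each $s_\prec(x)$ directed, so that the phrase ``existing suprema of directed sets of the form $s_\prec(x)$'' in (2) is meaningful in the first place.

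For (1) $\implies$ (2), I would start from $U = \ua U$ and $U = \lap{U}$. The upper-set condition is immediate. For inaccessibility, let $x \in P$ be such that $\bigsup s_\prec(x)$ exists and lies in $U$. Since an auxiliary relation is contained in $\leq$, each element of $s_\prec(x)$ lies below $x$, so $\bigsup s_\prec(x) \leq x$; the upper-set property of $U$ then forces $x \in U$. Invoking $U = \lap{U}$ together with Proposition~\ref{prop: basic uap and lap}(4) produces some $z \in U$ with $z \prec x$, and this $z$ witnesses $s_\prec(x) \cap U \neq \emptyset$.

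For (2) $\implies$ (1) under the additional assumption ${\prec}\in \app(P)$, the containment $\lap{U} \subseteq U$ is given by Proposition~\ref{prop: fundamental approx condition}. For the reverse inclusion, fix $x \in U$. The approximating hypothesis supplies $\bigsup s_\prec(x) = x \in U$, so the inaccessibility condition in (2) produces some $y \in s_\prec(x) \cap U$; a second application of Proposition~\ref{prop: basic uap and lap}(4) yields $x \in \lap{U}$. Combined with $U = \ua U$ (part of hypothesis (2)), this gives $U \in \mu^\prec(P)$.

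The main point worth flagging, rather than any technical obstacle, is the asymmetric role of approximating versus pre-approximating. Pre-approximating suffices for (1) $\implies$ (2) because we only use that $s_\prec(x)$ is a candidate directed set whose supremum, when it exists, is bounded above by $x$. For (2) $\implies$ (1), however, the inaccessibility hypothesis is only triggered when some $\bigsup s_\prec(x)$ actually lands in $U$, and to fire this from an \emph{arbitrary} $x \in U$ we need precisely the identity $\bigsup s_\prec(x) = x$ — the approximating condition. This is why (2) $\implies$ (1) cannot be concluded under pre-approximating alone, and it explains the one-sided nature of the equivalence.
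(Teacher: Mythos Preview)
Your proof is correct and follows essentially the same approach as the paper's: both directions are handled by unwinding the definitions, using that $\bigsup s_\prec(x) \leq x$ in the forward direction and the identity $\bigsup s_\prec(x) = x$ in the reverse. The only cosmetic difference is that in (1) $\implies$ (2) the paper applies $U = \lap{U}$ directly to the element $\bigsup s_\prec(x)$ (obtaining $y \prec \bigsup s_\prec(x) \leq x$, hence $y \prec x$), whereas you first pass to $x$ via the upper-set property and then apply $U = \lap{U}$ there; both reach the same witness.
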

\begin{proof}
(1) $\implies$ (2): Suppose $x \in P$ is such that $\bigsup s_\prec(x)$ exists and belongs to $U$.  Since $U = \lap{U}$, there exists $y \in U$ such that $y \prec \bigsup s_\prec(x) \leq x$ and so $y \prec x$.

Assume further that ${\prec}\in \app(P)$.
Then for each $x \in U$, $s_\prec(x)$ is directed and $\bigsup s_\prec(x) = x$.  By (2), there exists $y \in s_\prec(x)$ such that $y \in U$, i.e., $x \in \lap{U}$.  So, (2)~$\implies$~(1).
\end{proof}

The above immediately yields:
\begin{cor} \label{cor: induced is finer than scott}
Let $P$ be a poset and ${\prec}\in \app(P)$.
Then the $\mu^\prec$-topology is finer than the Scott topology, i.e.,
$\sigma(P) \subseteq \mu^\prec(P)$.
\end{cor}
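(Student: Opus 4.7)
The plan is to invoke Proposition~\ref{prop: inaccessible by joins of sprec(x)} directly, since that proposition already does nearly all the work needed. Concretely, let $U \in \sigma(P)$ be an arbitrary Scott open set; I must show $U \in \mu^\prec(P)$, that is, $U$ is upper and $U = \lap{U}$.

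First I would observe that $U$ is upper by definition of Scott openness. Next, I would verify condition~(2) of Proposition~\ref{prop: inaccessible by joins of sprec(x)}: given any $x \in P$ such that $\bigsup s_\prec(x)$ exists and lies in $U$, I claim some element of $s_\prec(x)$ already lies in $U$. Since $\prec$ is approximating, hence pre-approximating, the set $s_\prec(x)$ is directed, so this follows immediately from $U$ being inaccessible by directed suprema (which holds for every directed set, and in particular for $s_\prec(x)$). Having checked condition~(2) and having ${\prec}\in \app(P)$ by hypothesis, the second implication of Proposition~\ref{prop: inaccessible by joins of sprec(x)} yields $U \in \mu^\prec(P)$.

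Since $U \in \sigma(P)$ was arbitrary, this establishes the inclusion $\sigma(P) \subseteq \mu^\prec(P)$. No step presents any obstacle: the main content is already packaged in the preceding proposition, and what remains is simply the verification that Scott openness specializes appropriately to the directed sets of the form $s_\prec(x)$. The only subtlety to flag is that approximating (not merely pre-approximating) is needed so that the reverse implication of Proposition~\ref{prop: inaccessible by joins of sprec(x)} is available; pre-approximation alone would not suffice here.
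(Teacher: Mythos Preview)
Your proposal is correct and follows exactly the route the paper intends: the corollary is stated immediately after Proposition~\ref{prop: inaccessible by joins of sprec(x)} with the phrase ``The above immediately yields,'' and your argument simply unpacks that implication by checking condition~(2) for an arbitrary Scott open set and invoking the $(2)\Rightarrow(1)$ direction under the hypothesis ${\prec}\in\app(P)$.
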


\begin{prop} \label{prop: order compatability of mu prec}
Let $P$ be a poset and ${\prec}\in \app(P)$.
Then the $\mu^\prec$-topology is order-compatible, i.e., its specialization order coincides with the underlying order.
\end{prop}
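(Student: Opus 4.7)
The plan is to unpack what order-compatibility means and verify the two inclusions between the specialization order of $\mu^\prec(P)$ and the underlying order $\leq$. Recall that the specialization order of a topological space $(P,\tau)$ is given by $x \sqsubseteq y$ iff every $\tau$-open set containing $x$ also contains $y$. So the goal is to show that $x \sqsubseteq y$ in $(P,\mu^\prec(P))$ iff $x \leq y$ in $P$.

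First I would handle the easy direction ($\leq$ implies $\sqsubseteq$). By the very definition of a $\prec$-open set, every $U \in \mu^\prec(P)$ satisfies $U = \ua U$. Thus, if $x \leq y$ and $U \in \mu^\prec(P)$ with $x \in U$, then $y \in U$ immediately. This gives $x \sqsubseteq y$.

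For the converse ($\sqsubseteq$ implies $\leq$), I would proceed contrapositively: assume $x \not\leq y$ and exhibit a $\mu^\prec$-open set that separates $x$ from $y$. The natural candidate is $U := P \setminus \da y$. One checks that $\da y$ is a lower set that is closed under all existing directed suprema (any directed $D \subseteq \da y$ has every element bounded by $y$, hence so does $\bigsup D$), so $\da y$ is Scott closed and $U$ is Scott open. Now Corollary~\ref{cor: induced is finer than scott} supplies $\sigma(P) \subseteq \mu^\prec(P)$, so $U \in \mu^\prec(P)$. Finally, $x \not\leq y$ means $x \in U$ while $y \notin U$, witnessing $x \not\sqsubseteq y$.

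I do not foresee a real obstacle here: the proposition is essentially a packaging of two previously established facts, namely that $\prec$-opens are upper sets (built into their definition) and that $\mu^\prec$ refines the Scott topology (which itself is well known to be order-compatible). The argument never needs to touch the approximating hypothesis directly, except through Corollary~\ref{cor: induced is finer than scott} which invokes it. The mildly delicate point worth stating explicitly is the direction of the inequality when passing from a coarser topology to a finer one at the level of specialization preorders, but this is subsumed by simply exhibiting the separating open set $P \setminus \da y$ as above.
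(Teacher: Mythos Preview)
Your proposal is correct and follows essentially the same approach as the paper: both directions rest on (i) $\prec$-opens being upper sets and (ii) Corollary~\ref{cor: induced is finer than scott}. The paper phrases the second direction abstractly (a finer topology has a smaller specialization order, and the Scott specialization order is $\leq$), whereas you unpack this by explicitly exhibiting the separating Scott open $P\setminus\da y$; these are the same argument at different levels of detail.
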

\begin{proof}
Since the $\mu^\prec$-topology is finer than the Scott topology, its specialization order is contained in that of the Scott topology, which is the underlying order.  The reverse containment holds by virtue of the fact that $\prec$-opens are upper sets.
\end{proof}

\begin{remark}
\label{thm: scott topology coincide with induced top by ll}
The $\mu^\ll$-topology and Scott topology on a continuous poset coincide.
\end{remark}
\begin{proof}
Immediate from Proposition~\ref{prop: ll open sets} and~\ref{cor: induced is finer than scott}.
\end{proof}

\begin{lem}
\label{lem: lower inclusion}
Let $P$ be a poset, $A \subseteq P$ and ${\prec}\in \app(P)$.  Then
\[
\intr_{\mu^\prec}(A) \subseteq \lap{A}.
\]
\end{lem}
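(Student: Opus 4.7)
The plan is to unpack both sides via the characterization of membership in $\lap{A}$ given by Proposition~\ref{prop: basic uap and lap}(4), namely that $z \in \lap{B}$ precisely when there is some $y \in B$ with $y \prec z$. With this reformulation in hand, the containment should be a quick diagram chase through the definition of $\prec$-open sets.

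Concretely, I would start from an arbitrary $x \in \intr_{\mu^\prec}(A)$. By definition of topological interior, there exists a $\mu^\prec$-open set $U$ with $x \in U \subseteq A$. Since $U \in \mu^\prec(P)$, the defining property of $\prec$-open sets gives $U = \lap{U}$, so in particular $x \in \lap{U}$. Applying Proposition~\ref{prop: basic uap and lap}(4) to $U$, I get some $y \in U$ with $y \prec x$. Now $U \subseteq A$ yields $y \in A$, and so $y \in A$ together with $y \prec x$ allows me to invoke Proposition~\ref{prop: basic uap and lap}(4) in the reverse direction, this time for the set $A$, concluding that $x \in \lap{A}$.

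There is essentially no obstacle here: the approximating hypothesis on $\prec$ is not even strictly needed for this direction, since the argument only uses that $\prec \in \papp(P)$ (which is already implicit in the very notion of a $\prec$-open set, via Definition~\ref{defi: prec-topology}). The stronger assumption ${\prec} \in \app(P)$ in the statement is presumably included because it is the running hypothesis for comparing $\intr_{\mu^\prec}$ with $\lap{(-)}$ throughout this subsection, and because the companion reverse inclusion (which presumably follows in the paper) genuinely needs approximating rather than merely pre-approximating relations. So I would simply write the argument as above and flag, if desired, that only pre-approximation is used.
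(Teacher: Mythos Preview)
Your argument is correct and is essentially an element-level unpacking of the paper's one-line proof. The paper argues more abstractly: since $\intr_{\mu^\prec}(A)$ is itself $\prec$-open, $\intr_{\mu^\prec}(A) = \lap{(\intr_{\mu^\prec}(A))} \subseteq \lap{A}$ by monotonicity of $\lap{(-)}$ applied to $\intr_{\mu^\prec}(A) \subseteq A$; your version does the same thing by picking a $\prec$-open $U$ with $x \in U \subseteq A$ and chasing the witness $y$ through Proposition~\ref{prop: basic uap and lap}(4). Your observation that only pre-approximation is needed here is also accurate.
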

\begin{proof}
Follows from $\intr_{\mu^\prec}(A) \subseteq A$ and the monotonicity of the lower approximation operator on the power-set lattice.
\end{proof}

\begin{lem}
\label{lem: upper inclusion}
Let $P$ be a poset, $A \subseteq P$ and ${\prec}\in \app(P)$.  Then
\[
\uap{A} \subseteq \cl_{\mu^\prec}(A).
\]
\end{lem}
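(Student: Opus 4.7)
The plan is to prove the containment pointwise by the standard topological criterion for closure: an element $x$ lies in $\cl_{\mu^\prec}(A)$ iff every $\mu^\prec$-open neighborhood of $x$ meets $A$. So I fix $x \in \uap{A}$ and an arbitrary $\prec$-open set $U$ containing $x$, and aim to produce a point of $U \cap A$.

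The machinery that drives this is the defining equation $U = \lap{U}$ of a $\prec$-open together with Proposition~\ref{prop: basic uap and lap}(4), which recasts membership in $\lap{U}$ as the existence of a $\prec$-predecessor inside $U$. Concretely, since $x \in U = \lap{U}$, there exists $y \in U$ with $y \prec x$, so in particular $y \in s_\prec(x)$. Now I use the hypothesis $x \in \uap{A}$, which by definition says $s_\prec(x) \subseteq \da A$; hence $y \in \da A$, so there exists $a \in A$ with $y \leq a$. Finally, $U$ being a $\prec$-open is upper, so $a \in U$, and therefore $a \in U \cap A$, as desired.

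This establishes that every $\prec$-open neighborhood of $x$ intersects $A$, giving $x \in \cl_{\mu^\prec}(A)$. There is no genuine obstacle: the argument is a one-step chase through the definitions, and the only nontrivial ingredient is the reformulation of $\lap{U}$ supplied by Proposition~\ref{prop: basic uap and lap}(4). Notice that the argument does not actually use the full strength of ${\prec}\in \app(P)$; pre-approximation (needed only to have the $\prec$-opens form a topology, via Theorem~\ref{thm: new topology generated by lap}) already suffices for the containment itself.
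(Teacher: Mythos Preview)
Your argument is correct. The paper's proof is organized a bit differently: it first isolates an intrinsic description of $\mu^\prec$-closed sets, namely that $B$ is $\mu^\prec$-closed iff $B$ is lower and satisfies $s_\prec(a)\subseteq B \Rightarrow a\in B$; then, since $s_\prec(x)\subseteq \da A\subseteq \cl_{\mu^\prec}(A)$ for $x\in\uap{A}$, the conclusion $x\in\cl_{\mu^\prec}(A)$ is immediate. Your route uses the dual formulation---the open-neighborhood criterion for closure---and chases through $U=\lap{U}$ directly. The two arguments are really the same computation read on complementary sides; the paper's packaging has the small bonus of leaving behind a reusable characterization of $\mu^\prec$-closed sets, while yours is a touch more direct. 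Your observation that only ${\prec}\in\papp(P)$ is needed is also correct and applies equally to the paper's argument.
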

\begin{proof}
We first prove that for each subset $B\subseteq P$, it is $\mu^\prec$-closed iff it is lower and closed under suprema of set of the form $s_\prec(a)$, $a\in P$. Suppose $B$ is $\mu^\prec$-closed and there exists $a\in P$ such that $s_\prec(a)\subseteq B$, then $a\in B$ because $P \backslash B$ is $\mu^\prec$-open.
The contrary is obvious.

For each $x\in \uap{A}$, $s_\prec(x) \subseteq {\downarrow}A\subseteq \cl_{\mu^\prec}(A)$. Then $x\in \cl_{\mu^\prec}(A)$.
\end{proof}

The following standard topological result comes in handy soon:
\begin{lem}
\label{thm: intr and cl}
For any topological space $(X,\tau)$ and $A \subseteq X$, it holds that
\[
X \backslash \cl_\tau(A) =\intr_\tau (X \backslash A)
\]
and hence $\cl_\tau(A)$ and $\intr_\tau (X \backslash A)$ define a partition on $X$.
\end{lem}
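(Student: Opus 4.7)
The plan is to prove the identity $X\setminus\cl_\tau(A)=\intr_\tau(X\setminus A)$ by directly manipulating the well-known characterizations of closure as an intersection of closed supersets and interior as a union of open subsets, and then to deduce the partition claim as an immediate corollary.

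First I would recall that
\[
\cl_\tau(A) \;=\; \bigcap\{C\subseteq X \mid C \text{ is closed and } A\subseteq C\},
\qquad
\intr_\tau(B) \;=\; \bigcup\{U\subseteq X \mid U \text{ is open and } U\subseteq B\}.
\]
Then by De Morgan's law,
\[
X\setminus\cl_\tau(A) \;=\; \bigcup\{X\setminus C \mid C \text{ is closed and } A\subseteq C\}.
\]
Using the two bijective correspondences \emph{($C$ closed $\iff$ $X\setminus C$ open)} and \emph{($A\subseteq C \iff X\setminus C\subseteq X\setminus A$)}, the indexing set in the previous display is in bijection (via $C\mapsto X\setminus C$) with the indexing set for $\intr_\tau(X\setminus A)$, which yields the equality $X\setminus\cl_\tau(A)=\intr_\tau(X\setminus A)$.

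For the partition claim, setting $Y:=\cl_\tau(A)$ and $Z:=\intr_\tau(X\setminus A)$, the first part gives $Z=X\setminus Y$, whence $Y\cap Z=\emptyset$ and $Y\cup Z=X$. Both sets are moreover nonempty only in the obvious nondegenerate cases, but the essential fact that $\{Y,Z\}$ partitions $X$ (once we allow empty blocks) is immediate. I do not expect any obstacle: the argument is pure set-theoretic manipulation of the defining formulas for closure and interior, so the main care is only in citing De Morgan's law and the complementation bijection between open and closed sets correctly.
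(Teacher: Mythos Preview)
Your argument is correct and is the standard proof of this elementary topological identity. The paper itself does not supply a proof for this lemma at all: it is introduced with the phrase ``The following standard topological result comes in handy soon'' and is stated without justification, so there is no approach to compare against. Your write-up would serve perfectly well as the omitted proof; the only minor quibble is the parenthetical about nonempty blocks, which is irrelevant here since the paper's use of ``partition'' clearly allows empty pieces.
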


\begin{thm} \label{thm: chain of containments}
Let $P$ be a poset, $A \subseteq P$ and ${\prec}\in \app(P)$.  Then the following chain of set inclusions hold:
\[
\intr_\sigma(A) \subseteq \intr_{\mu^\prec} (A) \subseteq \lap{A} \subseteq A \subseteq \uap{A} \subseteq \cl_{\mu^\prec}{(A)} \subseteq \cl_\sigma(A).
\]
\end{thm}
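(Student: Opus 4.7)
The plan is to assemble the chain from previously established results, treating it as a concatenation of six separate inclusions that have essentially all been proved (or are immediate).

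First I would handle the innermost four inclusions. The sandwich $\lap{A} \subseteq A \subseteq \uap{A}$ is exactly the content of Proposition~\ref{prop: fundamental approx condition} (and requires only that $\prec$ be auxiliary, so the stronger hypothesis $\prec \in \app(P)$ is more than enough). The adjacent inclusions $\intr_{\mu^\prec}(A) \subseteq \lap{A}$ and $\uap{A} \subseteq \cl_{\mu^\prec}(A)$ are Lemma~\ref{lem: lower inclusion} and Lemma~\ref{lem: upper inclusion} respectively.

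Next, for the two outer inclusions $\intr_\sigma(A) \subseteq \intr_{\mu^\prec}(A)$ and $\cl_{\mu^\prec}(A) \subseteq \cl_\sigma(A)$, the key fact is Corollary~\ref{cor: induced is finer than scott}, which gives $\sigma(P) \subseteq \mu^\prec(P)$. From this, a standard topological observation applies: when one topology refines another, interiors grow and closures shrink. Concretely, any Scott-open subset of $A$ is $\mu^\prec$-open, so the union of Scott-open subsets of $A$ is contained in the union of $\mu^\prec$-open subsets of $A$; dually, every Scott-closed set containing $A$ is $\mu^\prec$-closed, so the intersection taken over the (larger) collection of $\mu^\prec$-closed supersets of $A$ is no larger than the intersection taken over Scott-closed supersets.

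There is no real obstacle here; the statement is a consolidated summary meant to display how the $\mu^\prec$-approximation operators are sandwiched between the Scott operators on one side and the set $A$ on the other. The only mildly nontrivial ingredient in the whole chain is the fact that $\prec$-opens are inaccessible by existing directed suprema, which was used to prove Corollary~\ref{cor: induced is finer than scott}; once that corollary is invoked, the present theorem follows by a one-line citation of each of the six inclusions in turn.
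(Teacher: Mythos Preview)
Your proposal is correct and matches the paper's own proof almost exactly: both assemble the middle four inclusions from Proposition~\ref{prop: fundamental approx condition}, Lemma~\ref{lem: lower inclusion}, and Lemma~\ref{lem: upper inclusion}, and both obtain the two outer inclusions from Corollary~\ref{cor: induced is finer than scott}. The only cosmetic difference is that the paper derives $\cl_{\mu^\prec}(A) \subseteq \cl_\sigma(A)$ from the interior inclusion via complementation (citing Lemma~\ref{thm: intr and cl}), whereas you argue both outer inclusions directly from the standard fact that a finer topology yields larger interiors and smaller closures; these are the same argument in slightly different dress.
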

\begin{proof}
That $\intr_{\mu^\prec} (A) \subseteq \lap{A} \subseteq A \subseteq \uap{A}$
holds follows from Lemmas~\ref{lem: lower inclusion}, \ref{lem: upper inclusion} and
\ref{prop: fundamental approx condition}.  Notice that if we can establish that
$\intr_\sigma (A) \subseteq \intr_{\mu^\prec} (A)$ holds for any subset $A$ of $P$, then it follows from Lemma~\ref{thm: intr and cl} and Theorem~\ref{thm: uap and lap determines each other} that $\cl_{\mu^\prec}(A) \subseteq \cl_\sigma(A)$.  Thus, it remains to prove that $\intr_\sigma (A) \subseteq \intr_{\mu^\prec} (A)$.  But this is immediate by Lemma~\ref{cor: induced is finer than scott}.
\end{proof}

\begin{lem}
\label{lem: cont implies coincidence}
If $P$ is a continuous poset and $A \in \mathcal{U}(P)$, then
\[
\intr_\sigma(A) = A^{\da \ll}.
\]
\end{lem}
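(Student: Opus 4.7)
The plan is to establish the two inclusions $\intr_\sigma(A) \subseteq A^{\da \ll}$ and $A^{\da \ll} \subseteq \intr_\sigma(A)$ separately, in each direction invoking continuity of $P$ together with the properties of $\ll$ collected in Propositions~\ref{prop: basis lemma} and~\ref{prop: Scott topology base for cont}. Throughout, I use Proposition~\ref{prop: basic uap and lap}(4): since $A$ is upper, $x \in A^{\da \ll}$ is equivalent to the existence of some $y \in A$ with $y \ll x$.

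For the first inclusion, I would take $x \in \intr_\sigma(A)$ and pick a Scott open $U$ with $x \in U \subseteq A$. Because $P$ is continuous, $\dda x$ is directed with $\bigsup \dda x = x \in U$, so Scott openness of $U$ yields some $y \in \dda x \cap U$. Such a $y$ satisfies $y \in A$ and $y \ll x$, which by Proposition~\ref{prop: basic uap and lap}(4) places $x$ in $A^{\da \ll}$.

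For the reverse inclusion, I would start with $x \in A^{\da \ll}$ and use Proposition~\ref{prop: basic uap and lap}(4) to obtain $y \in A$ with $y \ll x$. Since $A$ is an upper set and $y \in A$, we have $\ua y \subseteq A$. Proposition~\ref{prop: Scott topology base for cont}(1)(3) tells us that $\dua y$ is Scott open and equals $\intr_\sigma(\ua y)$; hence $\dua y \subseteq \intr_\sigma(A)$. As $y \ll x$ means $x \in \dua y$, we conclude $x \in \intr_\sigma(A)$.

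No step looks like a serious obstacle; both directions are one-shot applications of continuity. The only subtlety to flag is the appeal to the simplified form of Proposition~\ref{prop: basic uap and lap}(4), which is sound here precisely because $A$ is upper -- the same hypothesis also supplies the inclusion $\ua y \subseteq A$ needed to embed $\dua y$ into $\intr_\sigma(A)$ in the second direction.
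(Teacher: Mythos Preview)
Your proof is correct, but it proceeds along a different route from the paper's. You argue directly from classical domain-theoretic facts: continuity gives you a directed $\dda x$ with supremum $x$ to push into a Scott open $U$, and Proposition~\ref{prop: Scott topology base for cont} supplies the Scott open $\dua y$ to embed into $\intr_\sigma(A)$. The paper instead routes the argument through the machinery it has just built. It invokes (INT) for $\ll$ (Proposition~\ref{prop: cont implies way-below is INT}) together with Theorem~\ref{thm: char of INT} to deduce that $A^{\da \ll}$ is idempotent under the lower approximation operator, hence $\ll$-open; this gives $A^{\da \ll} \subseteq \intr_{\mu^\ll}(A)$. The reverse inclusion comes from the chain of containments in Theorem~\ref{thm: chain of containments}, and finally Remark~\ref{thm: scott topology coincide with induced top by ll} identifies $\intr_{\mu^\ll}$ with $\intr_\sigma$ on a continuous poset.

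Your approach is shorter and self-contained, relying only on standard facts about continuous posets. The paper's approach, by contrast, is designed to exhibit this lemma as a corollary of the $\mu^\prec$-topology framework, thereby illustrating that the newly introduced operators genuinely recover the Scott interior in the continuous case. Both are fine; yours would be the natural proof in a textbook, while the paper's is the one that justifies the preceding development.
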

\begin{proof}
Since $P$ is a continuous poset, $\ll$ enjoys the (INT) property.  Applying Theorem~\ref{thm: char of INT} for the upper set $A$, we have $(A^{\da \ll})^{\da \ll} = A^{\da \ll}$.  Thus, $A^{\da \ll}$ is a $\ll$-open set that is contained in $A$, and so $A^{\da \ll} \subseteq \intr_{\mu^\ll}(A)$.  But Theorem~\ref{thm: chain of containments} asserts that $\intr_{\mu^\ll}(A) \subseteq A^{\da \ll}$ so that $\intr_{\mu^\ll}(A) = A^{\da \ll}$.
Finally, $P$ is continuous and so Theorem~\ref{thm: scott topology coincide with induced top by ll} asserts that $\intr_{\mu^\ll}(A) = \intr_{\sigma}(A)$.  Thus, the proof is complete.
\end{proof}

\begin{lem} \label{lem: converse lemma}
If $P$ is a poset such that every $A \in \mathcal{U}(P)$ satisfies the equation
\[
A^{\da \ll} = \intr_\sigma(A),
\]
then $P$ is continuous.
\end{lem}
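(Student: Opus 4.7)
My plan is to verify continuity of $P$ directly by showing, for each $x \in P$, that $\dda x$ is directed with $\sup \dda x = x$. I will instantiate the hypothesis $A^{\da \ll} = \intr_\sigma(A)$ at three carefully chosen upper sets---$P$ itself, the complement $P \setminus \da u$ for candidate upper bounds $u$ of $\dda x$, and the binary intersection $\ua y_1 \cap \ua y_2$ for pairs $y_1, y_2 \in \dda x$---which will respectively yield non-emptiness of $\dda x$, the identity $\sup \dda x = x$, and pairwise upper-bound closure within $\dda x$.

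For non-emptiness, apply the hypothesis to $A = P \in \mathcal{U}(P)$. Since $\intr_\sigma(P) = P$, the identity gives $P^{\da \ll} = P$, which by Proposition~\ref{prop: basic uap and lap}(4) means every $x \in P$ admits some $y \ll x$; in particular, $\dda x \neq \emptyset$.

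For $\sup \dda x = x$, note that $x$ is an upper bound of $\dda x$ since $\dda x \subseteq \da x$. Suppose $u$ is any upper bound of $\dda x$ with $x \not\leq u$; I consider $A := P \setminus \da u$, which is upper. The set $\da u$ is Scott-closed (it is lower, and any directed $D \subseteq \da u$ has $\sup D \leq u$), so $A$ is Scott-open and $\intr_\sigma(A) = A$. The hypothesis now forces $A^{\da \ll} = A$, and since $x \in A$, there exists $z \in A$ with $z \ll x$; this $z$ lies in $\dda x$ while $z \not\leq u$, contradicting the choice of $u$. Hence every upper bound of $\dda x$ dominates $x$, so $\sup \dda x = x$.

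For directedness, take $y_1, y_2 \in \dda x$ and let $A := \ua y_1 \cap \ua y_2$, which is upper as an intersection of upper sets. Using that topological interior distributes over binary intersection, together with the hypothesis applied separately to $\ua y_1$ and $\ua y_2$ and with Proposition~\ref{prop: basic uap and lap}(3), I compute
\[
\intr_\sigma(A) = \intr_\sigma(\ua y_1) \cap \intr_\sigma(\ua y_2) = (\ua y_1)^{\da \ll} \cap (\ua y_2)^{\da \ll} = \dua y_1 \cap \dua y_2.
\]
Because $y_1 \ll x$ and $y_2 \ll x$, we have $x \in \dua y_1 \cap \dua y_2 = \intr_\sigma(A)$; the hypothesis applied to $A$ then gives $x \in A^{\da \ll}$, so by Proposition~\ref{prop: basic uap and lap}(4) some $z \in A$ satisfies $z \ll x$. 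This $z$ lies in $\dda x$ and dominates both $y_1$ and $y_2$, completing the proof that $\dda x$ is directed with supremum $x$. The main conceptual obstacle is spotting this three-fold use of the hypothesis---in particular, that directness requires applying the identity both to each $\ua y_i$ and to their binary intersection, collapsing the question to the tautology $y_1, y_2 \ll x$.
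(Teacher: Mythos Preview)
Your proof is correct and follows essentially the same approach as the paper's: both establish directedness of $\dda x$ by applying the hypothesis to $\ua y_1$, $\ua y_2$, and their intersection (exploiting that interior commutes with finite meets), and both obtain $\bigsup \dda x = x$ by applying the hypothesis to the Scott-open set $P \setminus \da u$ for a putative smaller upper bound $u$. Your explicit non-emptiness step via $A = P$ is a small refinement the paper leaves implicit.
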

\begin{proof}
Let $x \in P$ be given.  We need to show that
\begin{enumerate}
\item $\dda x$ is a directed set; and
\item $\bigsup \dda x = x$.
\end{enumerate}
Let $a,~ b \in \dda x$ be given.
Thus, $x \in \dua a$ and $x \in \dua b$ so that $x \in \dua a \cap \dua b$.  By Proposition~\ref{prop: basic uap and lap}(3), $(\ua a)^{\da \ll} = \dua a$ and
$(\ua b)^{\da \ll} = \dua b$.
By the given assumption, $\dua a = (\ua a)^{\da \ll} = \intr_\sigma(\ua a)$
and $\dua b = (\ua b)^{\da \ll} = \intr_\sigma(\ua b)$. This implies that
\[
x \in \dua a \cap \dua b = \intr_\sigma (\ua a) \cap \intr_\sigma (\ua b) = \intr_\sigma (\ua a \cap \ua b) = (\ua a \cap \ua b)^{\da \ll}.
\]
Thus there exists $c \in \ua a \cap \ua b$ such that $c \ll x$.
This proves that $\dda x$ is a directed set.

We now show that $\bigsup \dda x = x$.  Suppose for the sake of contradiction that $\bigsup \dda x \neq x$.  Because $x$ is an upper bound of $\dda x$, this would mean that there is another upper bound $y$ of $\dda x$ such that $x \not \leq y$, i.e., $x \in P \backslash \da y$.  Since $\da y$ is a Scott closed subset, the set $P \backslash \da y$ is Scott open, and hence equals to its Scott interior.
Thus, $x \in \intr_\sigma(P \backslash \da y) = (P \backslash \da y)^{\da \ll}$
by the assumption.  Therefore there is an element $w \in P \backslash \da y$ such that $w \ll x$.  This contradicts that $y$ is an upper bound of the set $\dda x$ since $w$ belongs to $\dda x$ and yet $w \not \leq y$.  Then the poset $P$ is continuous.
\end{proof}

\begin{lem} \label{lem: implication of three conditions}
If $P$ is a poset with an approximating auxiliary relation $\prec$ satisfying $\lap{A} = \intr_\sigma(A)$ for every $A \in \mathcal{U}(P)$, then the following hold:
\begin{enumerate}
\item $\mu^\prec(P) = \sigma(P)$.
\item $\prec = \ll$.
\item $P$ is continuous.
\end{enumerate}
\end{lem}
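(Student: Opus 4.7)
The plan is to extract (1) directly from the hypothesis and Theorem~\ref{thm: chain of containments}, then to establish (2), from which (3) is essentially immediate.

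For (1), the containment $\sigma(P)\subseteq\mu^\prec(P)$ is exactly Corollary~\ref{cor: induced is finer than scott}, which applies because $\prec$ is approximating. For the reverse containment, any $U\in\mu^\prec(P)$ is an upper set satisfying $U=\lap{U}$; applying the hypothesis to the upper set $U$ collapses $\lap{U}$ to $\intr_\sigma(U)$, so $U=\intr_\sigma(U)$ is Scott open.

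For (2), the inclusion ${\ll}\subseteq{\prec}$ is the general fact recorded in Proposition~\ref{prop: cont implies way-below is smallest approx} that $\ll$ is contained in every approximating auxiliary relation. The interesting direction is ${\prec}\subseteq{\ll}$, and this is where the hypothesis has real content. Given $a\prec x$, Proposition~\ref{prop: basic uap and lap}(3) gives $x\in(\ua a)^{\da\prec}=\lap{(\ua a)}$, and the hypothesis upgrades this to $x\in\intr_\sigma(\ua a)$, so $\intr_\sigma(\ua a)$ is a Scott open neighborhood of $x$ sitting inside $\ua a$. I then verify $a\ll x$ directly: for any directed $D\subseteq P$ for which $\bigsup D$ exists and satisfies $\bigsup D\geq x$, the upper-set property of $\intr_\sigma(\ua a)$ places $\bigsup D$ inside it, and Scott inaccessibility supplies some $d\in D$ with $d\in\intr_\sigma(\ua a)\subseteq\ua a$; hence $a\leq d$ and so $a\in\da D$, which is precisely what is needed to conclude $a\ll x$.

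Conclusion (3) is then almost free: $\prec$ is approximating by hypothesis and, by (2), coincides with $\ll$, so $\ll$ itself is approximating, which is the stated characterization of continuity recorded in the preliminaries (just after the definition of approximating). The main obstacle is the verification ${\prec}\subseteq{\ll}$ in step (2); once one recognises that the hypothesis forces the set $\lap{(\ua a)}=\{y\mid a\prec y\}$ to be Scott open, the remainder is a routine appeal to the definitions of Scott openness and of $\ll$.
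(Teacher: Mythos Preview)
Your proof is correct and follows essentially the same route as the paper: for (1) you show $\mu^\prec(P)\subseteq\sigma(P)$ by observing that any $\prec$-open $U$ satisfies $U=\lap{U}=\intr_\sigma(U)$, and take the reverse from Corollary~\ref{cor: induced is finer than scott}; for (2) you use Proposition~\ref{prop: cont implies way-below is smallest approx} for ${\ll}\subseteq{\prec}$ and then verify $a\prec x\Rightarrow a\ll x$ via the Scott-open set $\lap{(\ua a)}=\intr_\sigma(\ua a)$; and (3) is immediate. The paper's proof of (1) additionally records that $\lap{A}=\intr_{\mu^\prec}(A)$ for upper $A$, but this observation is not needed for the stated conclusion, so your streamlined version is fine.
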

\begin{proof}\leavevmode
\begin{enumerate}
\item Note that $\lap{A} = \intr_{\sigma}(A)$ for every upper set $A$ implies that
\[
\lap{(\lap{A})} = \intr_\sigma(\intr_\sigma A) = \intr_\sigma (A) = \lap{A}.
\]
Thus, $\lap{A}$ is $\prec$-open, i.e., $\lap{A} = \intr_{\mu^\prec}(A)$.
So, for any $\prec$-open set $U$, since it is upper by definition, we have that
$U = \lap{U} = \intr_\sigma(U)$, which implies that every $\prec$-open set is Scott open.  This proves (1).

\item Note that by Proposition~\ref{prop: cont implies way-below is smallest approx}, $\ll \subseteq \prec$ for any approximating auxiliary relation $\prec$.  Let $x$ and $y \in P$ be such that $x \prec y$.  We aim to show that $x \ll y$.  Let $D$ be a directed set whose supremum exists, and $\bigsup D \geq y$.  Note that $\{w \in P \mid x \prec w\} = \lap{(\ua x)} = \intr_\sigma(\ua x)$ is a Scott open set that contains $y$.  Since Scott open sets are upper, $\bigsup D \geq y$ implies that $\bigsup D \in \{w \in P \mid x \prec w\}$.  Thus, there exists $d \in D$ such that $x \prec d$, and hence $x \leq d$.
Thus, $\prec = \ll$.

\item By the preceding part, $\prec = \ll$ which is approximating.  So, by definition, $P$ is continuous.
\end{enumerate}
\end{proof}

All in all, we obtained a new characterization of continuity of posets:
\begin{thm}[A characterization of continuous posets]
\label{thm: cha of cont posets}
The following are equivalent for a poset $P$.
\begin{enumerate}
\item $P$ is continuous.
\item $A^{\da \ll} = \intr_\sigma(A)$ for any upper subset $A$ of $P$.
\item There exists ${\prec}\in \app(P)$ such that $\lap{A} = \intr_\sigma(A)$ for any upper subset $A$ of $P$.
\item $A^{\ua \ll} = \cl_\sigma(A)$ for any lower subset $A$ of $P$.
\item There exists ${\prec}\in \app(P)$ such that $\uap{A} = \cl_\sigma(A)$ for any lower subset $A$ of $P$.
\end{enumerate}
\end{thm}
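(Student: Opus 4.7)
The plan is to assemble the equivalences by chaining together the three preceding lemmas and then dualising (2)$\iff$(4) and (3)$\iff$(5) via the partition theorems already established. I would argue along the cycle (1)$\Rightarrow$(2)$\Rightarrow$(1), then (1)$\Rightarrow$(3)$\Rightarrow$(1), and finally deduce (2)$\iff$(4) and (3)$\iff$(5) by complementation.

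For (1)$\Rightarrow$(2), I would simply invoke Lemma~\ref{lem: cont implies coincidence}, which directly asserts $\intr_\sigma(A) = A^{\da \ll}$ for every $A \in \mathcal{U}(P)$ whenever $P$ is continuous. The converse (2)$\Rightarrow$(1) is exactly Lemma~\ref{lem: converse lemma}. For (1)$\Rightarrow$(3), the natural witness is ${\prec} = {\ll}$: when $P$ is continuous, $\ll \in \app(P)$ and Lemma~\ref{lem: cont implies coincidence} says $\lap[\ll]{A} = A^{\da \ll} = \intr_\sigma(A)$ for every upper set $A$, so one just takes $\prec$ to be $\ll$. For (3)$\Rightarrow$(1), I would invoke Lemma~\ref{lem: implication of three conditions}(3), which gives continuity outright from the hypothesis.

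The remaining duality (2)$\iff$(4) and (3)$\iff$(5) I would handle uniformly. The key idea is: for an upper set $A$, the complement $P\setminus A$ is a lower set, and Theorem~\ref{thm: uap and lap determines each other} then gives the disjoint partition $\lap{A} \cup \uap{(P\setminus A)} = P$ with $\lap{A} \cap \uap{(P\setminus A)} = \emptyset$. On the topological side, Lemma~\ref{thm: intr and cl} supplies the analogous partition $\intr_\sigma(A) \sqcup \cl_\sigma(P\setminus A) = P$. Hence the equation $\lap{A} = \intr_\sigma(A)$ on the upper set $A$ is equivalent to $\uap{(P\setminus A)} = \cl_\sigma(P\setminus A)$ on the lower set $P\setminus A$. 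Since the assignment $A \mapsto P\setminus A$ is a bijection between upper and lower sets, (3)$\iff$(5) follows; specialising to $\prec = \ll$ gives (2)$\iff$(4).

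The main obstacle I anticipate is merely bookkeeping the complementation step cleanly, in particular making sure that the partition identity from Theorem~\ref{thm: uap and lap determines each other}(2) is being applied to an \emph{upper} set as required (so $A$ upper, equivalently $P\setminus A$ lower, which matches the hypothesis side of (4) and (5)). No new machinery is needed, and the proof reduces to a one-line citation for each implication in the cycle plus a short duality argument, so the writeup can be kept very brief — essentially just naming the relevant lemma at each step.
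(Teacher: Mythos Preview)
Your proposal is correct and matches the paper's own proof essentially verbatim: the paper also cites Lemmata~\ref{lem: cont implies coincidence}, \ref{lem: converse lemma} and~\ref{lem: implication of three conditions} for the equivalence of (1), (2), (3), and invokes Theorem~\ref{thm: uap and lap determines each other} together with Lemma~\ref{thm: intr and cl} for the dualities (2)$\iff$(4) and (3)$\iff$(5). Your write-up is slightly more explicit about which lemma handles which direction (in particular you spell out that (1)$\Rightarrow$(3) takes $\prec = {\ll}$), but the argument is identical.
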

\begin{proof}
The equivalence of statements (1), (2) and (3) is a direct result of Lemmata~\ref{lem: cont implies coincidence},~\ref{lem: converse lemma} and~\ref{lem: implication of three conditions}.  As for the equivalence of (2) and (4) (respectively, (3) and (5)), one just invokes Theorem~\ref{thm: uap and lap determines each other} and Lemma~\ref{thm: intr and cl}.
\end{proof}

By Theorem~\ref{thm: cha of cont posets}, we can prove the following result in another way.
\begin{cor}[A characterization of Scott closed sets]
\label{cor: char of scott closed sets}
Let $P$ be a continuous poset, $A \subseteq P$ and $x \in P$.
Then $x \in \cl_\sigma(A)$ if and only if $\dda x \subseteq \da A$.
\end{cor}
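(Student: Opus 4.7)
The plan is to deduce this from Theorem~\ref{thm: cha of cont posets}(4), which states that on a continuous poset, $A^{\ua \ll} = \cl_\sigma(A)$ for \emph{every lower subset} $A$. The main thing to take care of is that in the corollary $A$ is an arbitrary subset, not necessarily lower, so I would first reduce to the lower-set case.

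First I would observe that $\cl_\sigma(A) = \cl_\sigma(\da A)$ for any $A \subseteq P$: every Scott closed set is lower, so $\cl_\sigma(A)$ already contains $\da A$, giving the nontrivial inclusion, while the reverse containment is immediate from monotonicity of the closure operator. Symmetrically, Proposition~\ref{prop: basic uap and lap}(1) gives $\uap{A} = \uap{(\da A)}$, i.e.,\ $A^{\ua \ll} = (\da A)^{\ua \ll}$.

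Next I would apply Theorem~\ref{thm: cha of cont posets}(4) to the lower set $\da A$ to obtain
\[
(\da A)^{\ua \ll} = \cl_\sigma(\da A).
\]
Combining with the two observations above, this yields $A^{\ua \ll} = \cl_\sigma(A)$. Finally I would unpack the definition: by Definition~\ref{defi: lower and upper approx in dom}, $x \in A^{\ua \ll}$ exactly when $s_\ll(x) \subseteq \da A$, and $s_\ll(x)$ is by notation precisely $\dda x$. Thus $x \in \cl_\sigma(A)$ iff $\dda x \subseteq \da A$, which is the claimed equivalence.

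No serious obstacle arises, since Theorem~\ref{thm: cha of cont posets} supplies the essential content; the proof is a short bookkeeping argument built on two easy reductions (passing from $A$ to $\da A$ on both sides of the equation) followed by a direct translation of the upper-approximation operator back to its defining condition in terms of $\dda x$.
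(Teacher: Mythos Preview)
Your proof is correct and follows essentially the same route as the paper: apply Theorem~\ref{thm: cha of cont posets}(4) and then unpack the definition of $A^{\ua \ll}$ in terms of $s_\ll(x) = \dda x$. The paper's version simply writes $\cl_\sigma(A) = A^{\ua \ll}$ and concludes, whereas you are more explicit in justifying why the equality holds for an arbitrary (not necessarily lower) $A$ via the reductions $\cl_\sigma(A) = \cl_\sigma(\da A)$ and $A^{\ua \ll} = (\da A)^{\ua \ll}$; this extra care is warranted since Theorem~\ref{thm: cha of cont posets}(4) is stated only for lower sets.
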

\begin{proof}
By Theorem~\ref{thm: cha of cont posets}, $\cl_\sigma (A) = A^{\ua \ll}$.
So $x \in \cl_\sigma(A) \iff s_\ll(x) = \dda x \subseteq \da A$.
\end{proof}

\section{Further properties of the $\mu^\prec$-topology}
\label{sec: further properties}
The $\mu^\prec$-topology induced by an approximating auxiliary relation on a
poset has very tight connections with the Scott topology as revealed in Section~\ref{subsec: approx operators on approx rel}.  We now carry out further investigations into this new topology.

\begin{defi}[c-space]\cite{hoffmann81}
A \emph{c-space} is a topological space $X$ such that, for every $x \in X$, for every open neighbourhood $U$ of $x$, there is a point $y \in U$ such that $x \in \intr(\ua y)$.
\end{defi}

\begin{lem} \label{lem: lemma 1 to c-space theorem}
Let $P$ be a poset and ${\prec}\in \papp(P)$.
If $P_{\mu^\prec}$ is a $c$-space, then $\prec$ is approximating.
\end{lem}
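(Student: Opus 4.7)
The plan is to show that for every $x \in P$, $x = \bigsup s_\prec(x)$.  Since $\prec \in \papp(P)$, $s_\prec(x)$ is directed, and since $s_\prec(x) \subseteq {\da} x$ by auxiliarity, $x$ is already an upper bound, so only the least-upper-bound property requires proof.  The natural auxiliary set to introduce is
\[
D_x := \{y \in P \mid x \in \intr_{\mu^\prec}(\ua y)\},
\]
and the strategy is to establish that $D_x$ is a directed subset of $s_\prec(x)$ with $\bigsup D_x = x$.  Any upper bound of $s_\prec(x)$ will then also bound $D_x$, hence dominate $x = \bigsup D_x$, giving $x = \bigsup s_\prec(x)$ and showing that $\prec \in \app(P)$.

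For the inclusion $D_x \subseteq s_\prec(x)$: given $y \in D_x$, the $\mu^\prec$-open set $W := \intr_{\mu^\prec}(\ua y)$ contains $x$ and lies in $\ua y$; since $W = \lap{W}$, Proposition~\ref{prop: basic uap and lap}(4) furnishes some $w \in W$ with $w \prec x$, and the relation $y \leq w$ combined with the auxiliary condition yields $y \prec x$.  Directedness of $D_x$ follows from the $c$-space hypothesis: non-emptiness by applying the $c$-space property with $U = P$, and for $y_1, y_2 \in D_x$, applying the $c$-space property at $x$ to the $\mu^\prec$-open set $\intr_{\mu^\prec}(\ua y_1) \cap \intr_{\mu^\prec}(\ua y_2)$ produces a common upper bound $y_3 \in D_x$.

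The main obstacle is establishing the supremum step, $\bigsup D_x = x$.  I would argue by contradiction: assume $v$ is an upper bound of $D_x$ with $x \not\leq v$, and aim to produce $y \in D_x$ with $y \not\leq v$.  The key reduction is that for any $y_0 \in D_x$ with $y_0 \not\prec v$, the set $\intr_{\mu^\prec}(\ua y_0)$ is a $\mu^\prec$-open neighbourhood of $x$ disjoint from ${\da} v$---since any $w$ in it with $w \leq v$ would force $y_0 \prec w \leq v$, hence $y_0 \prec v$ by auxiliarity---so applying the $c$-space property to this neighbourhood delivers the desired $y \in D_x$ with $y \not\leq v$, contradicting $v$ being an upper bound of $D_x$.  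Thus the core of the argument, and the point at which the hypothesis $x \not\leq v$ must finally be deployed, is to show that not every element of $D_x$ can lie in $s_\prec(v)$; this is expected to be coaxed out by a further careful appeal to the $c$-space property together with the pre-approximating structure of $\prec$.
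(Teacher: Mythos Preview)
Your overall scaffolding matches the paper exactly: you introduce the same set
\[
D_x=\{y\in P\mid x\in \intr_{\mu^\prec}(\ua y)\},
\]
obtain non-emptiness via the $c$-space property with $U=P$, and prove $D_x\subseteq s_\prec(x)$ by the same computation (note that the paper does not bother to show $D_x$ is directed---directedness is never used, since the conclusion $\bigsup s_\prec(x)=x$ follows from $D_x\subseteq s_\prec(x)\subseteq\da x$ together with $\bigsup D_x=x$ alone).

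The genuine gap is in your supremum step. You correctly reduce the statement ``every upper bound $v$ of $D_x$ satisfies $x\le v$'' to the claim that $D_x\not\subseteq s_\prec(v)$ whenever $x\not\le v$; but observe what your own intermediate argument actually establishes. Contrapositively, you have shown that if $v$ is an upper bound of $D_x$ (i.e.\ $D_x\subseteq\da v$), then already $D_x\subseteq s_\prec(v)$. Hence the ``remaining'' implication $D_x\subseteq s_\prec(v)\Rightarrow x\le v$ is \emph{equivalent} to the original target $D_x\subseteq\da v\Rightarrow x\le v$. Your detour through $y_0\not\prec v$ therefore buys no real progress, and the vague appeal to ``a further careful appeal to the $c$-space property together with the pre-approximating structure of $\prec$'' is not a plan but a restatement of the problem.

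The paper's route to $\bigsup D_x=x$ is more direct and avoids this loop. Given an upper bound $z$ of $D_x$ and an arbitrary $\mu^\prec$-open $U\ni x$, the $c$-space property yields some $y\in U$ with $x\in\intr_{\mu^\prec}(\ua y)$; thus $y\in D_x$, so $y\le z$, and since $U$ is upper, $z\in U$. In other words, every $\mu^\prec$-open neighbourhood of $x$ contains $z$, i.e.\ $x$ lies below $z$ in the specialization order of $P_{\mu^\prec}$. The paper then invokes order-compatibility of $\mu^\prec$ (Proposition~\ref{prop: order compatability of mu prec}) to conclude $x\le z$. This single specialization argument replaces your entire contradiction scheme and closes the proof immediately.
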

\begin{proof}
Let $x \in P$ be given.  Since $\prec$ is pre-approximating (i.e., $s_\prec(x)$ is directed), it suffices to prove that $\bigsup s_\prec(x) = x$.  To do this, define a subset $D_x$ of $P$ as follows:
\[
D_x := \{y \in P \mid x \in \intr_{\mu^\prec}(\ua y)\}.
\]
Since $P_{\mu^\prec}$ is a c-space, taking the open neighbourhood $U = P$ of $x$, there exists $y \in P$ such that $x \in \intr_{\mu^{\prec}} (\ua y)$.  Hence $D_x \neq \emptyset$.

We now prove that $D_x \subseteq s_\prec(x)$.  For each $y \in D_x$, we have $x \in \intr_{\mu^\prec} (\ua y) = \lap{\left(\intr_{\mu^\prec} (\ua y)\right)}\subseteq \lap{(\ua y)}$.  Hence
$x \in \lap{(\ua y)}$ which then implies that there exists $p \in s_\prec(x) \cap \ua y$, i.e., $y \leq p \prec x$.  Since $\prec$ is an auxiliary relation, it follows that $y \prec x$.  Thus, $D_x \subseteq s_\prec(x)$.

Since $D_x \subseteq s_\prec(x)$, it is immediate that $x$ is an upper bound of $D_x$.
We now show that $\bigsup D_x = x$.  Let $z$ be any upper bound of $D_x$.  We show that $x \leq z$ by showing that every open neighbourhood $U$ of $x$ contains $z$, and using the fact that $\leq$ is the specialization order of $P_{\mu^\prec}$ by virtue of Proposition~\ref{prop: order compatability of mu prec}.  Since $P_{\mu^\prec}$ is a c-space, $x \in U$ implies that there exists $y \in U$ such that $x \in \intr_{\mu^\prec}(\ua y)$, i.e., $y \in D_x$.  Since $z$ is an upper bound of $D_x$, we have $y \leq z$, and since $U$ is an upper set with respect to the specialization order $\leq$, $z \in U$.  Because $s_\prec(x)$ contains a non-empty subset, i.e., $D_x$ whose supremum equals to $x$, and $x$ is an upper bound for $s_\prec(x)$, it follows that $\bigsup s_\prec(x) = x$.  Thus, the proof that $\prec$ is approximating is complete.
\end{proof}

\begin{lem} \label{lem: lemma 2 to c-space theorem}
Let $P$ be a poset and ${\prec}\in \papp(P)$ which satisfies (INT).
Then the following hold:
\begin{enumerate}
\item For each $x \in P$, $s_\succ(x) := \{y \in P \mid x \prec y\}$ is a $\prec$-open set.
\item The collection $\mathcal{B} := \{s_\succ(x) \mid x \in P\}$ is a base for the $\mu^\prec$-topology.
\end{enumerate}
\end{lem}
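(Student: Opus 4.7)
The plan is to prove (1) by recognizing $s_\succ(x)$ as the lower approximation $\lap{(\ua x)}$ and then invoking the (INT)-based idempotence established in Theorem~\ref{thm: char of INT}; part (2) then follows by unpacking the defining condition $U = \lap{U}$ of a $\prec$-open set to extract a base-element witness.

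For part (1), I first observe that by Proposition~\ref{prop: basic uap and lap}(3), $s_\succ(x) = (\ua x)^{\da \prec} = \lap{(\ua x)}$. Since $\ua x \in \mathcal{U}(P)$, Proposition~\ref{prop: basic uap and lap}(6) gives $\lap{(\ua x)} \in \mathcal{U}(P)$, so $s_\succ(x)$ is an upper set. For the second $\prec$-openness condition, I apply Theorem~\ref{thm: char of INT}(2): the hypothesis that $\prec$ satisfies (INT) yields $\lap{(\lap{A})} = \lap{A}$ for every $A \in \mathcal{U}(P)$, and taking $A = \ua x$ delivers $\lap{s_\succ(x)} = s_\succ(x)$, completing the verification.

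For part (2), let $U \in \mu^\prec(P)$ and $y \in U$. Since $U = \lap{U}$, Proposition~\ref{prop: basic uap and lap}(4) produces some $x \in U$ with $x \prec y$, so $y \in s_\succ(x)$. To verify $s_\succ(x) \subseteq U$, take any $z$ with $x \prec z$; then $x \leq z$, and since $U$ is upper with $x \in U$, we obtain $z \in U$. Combined with the fact, already established in (1), that every $s_\succ(x)$ lies in $\mu^\prec(P)$, this shows that $\mathcal{B}$ is a base for the $\mu^\prec$-topology.

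I do not foresee a serious obstacle: both parts unfold from correctly identifying $s_\succ(x)$ with a lower approximation and applying results already established in the paper. The only mildly delicate step is the opening observation $s_\succ(x) = \lap{(\ua x)}$, which is precisely what unlocks the use of Theorem~\ref{thm: char of INT} to translate (INT) into the idempotence needed for $\prec$-openness.
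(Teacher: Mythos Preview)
Your proof is correct and follows essentially the same route as the paper: for (1) you make explicit the identification $s_\succ(x) = \lap{(\ua x)}$ and then invoke the idempotence of Theorem~\ref{thm: char of INT}, which is exactly what the paper's one-line appeal to that theorem is doing; for (2) your argument matches the paper's (with the roles of the variable names $x$ and $y$ swapped), and you helpfully spell out the inclusion $s_\succ(x) \subseteq U$ via the auxiliary-relation property $x \prec z \Rightarrow x \leq z$, which the paper leaves implicit.
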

\begin{proof}
\begin{enumerate}
\item Let $x \in P$ be given.  Clearly, $s_\succ(x)$ is an upper set. Then $s_\succ(x) := \{y \in P \mid x \prec y\}$ is a $\prec$-open set from Theorem~\ref{thm: char of INT}.

\item Let $U$ be a $\prec$-open set and $x \in U$.  Since $U = \lap{U}$, there exists $y \in U$ such that $y \prec x$.  Thus, $x \in s_\succ(y) \subseteq U$.  So, $\mathcal{B}$ forms a base for the $\mu^\prec$-topology.\qedhere
\end{enumerate}
\end{proof}

\begin{thm} \label{thm: c-space theorem}
Let $P$ be a poset and ${\prec}\in \papp(P)$.
\begin{enumerate}
\item If $P_{\mu^\prec}$ is a c-space, then $\prec$ is approximating.
\item If $\prec$ satisfies (INT), then $P_{\mu^\prec}$ is a c-space.
\end{enumerate}
\end{thm}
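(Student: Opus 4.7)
The theorem has two parts, and the plan is to treat them quite asymmetrically.

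Part (1) is essentially a repackaging of Lemma~\ref{lem: lemma 1 to c-space theorem}: that lemma already shows that whenever $P_{\mu^\prec}$ is a c-space with $\prec \in \papp(P)$, the relation $\prec$ is approximating. So in the write-up I would simply cite Lemma~\ref{lem: lemma 1 to c-space theorem} and move on.

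Part (2) is where the real work lies, and the plan is to leverage the base $\mathcal{B} = \{s_\succ(x) \mid x \in P\}$ of the $\mu^\prec$-topology furnished by Lemma~\ref{lem: lemma 2 to c-space theorem}. Fix $x \in P$ and let $U$ be any $\mu^\prec$-open neighbourhood of $x$. Because $\mathcal{B}$ is a base, there is some $z \in P$ with $x \in s_\succ(z) \subseteq U$; unpacking, this means $z \prec x$. Now invoke the (INT) hypothesis on $\prec$ to interpolate: pick $w \in P$ with $z \prec w \prec x$. The candidate I would propose for the c-space witness is $y := w$.

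Two things then need to be checked, both of which fall out cleanly. First, $w \in U$: since $z \prec w$ we have $w \in s_\succ(z) \subseteq U$. Second, $x \in \intr_{\mu^\prec}(\ua w)$: from $w \prec x$ we have $x \in s_\succ(w)$, and by Lemma~\ref{lem: lemma 2 to c-space theorem}(1) the set $s_\succ(w)$ is $\prec$-open; moreover $s_\succ(w) \subseteq \ua w$ by property~(1) of auxiliary relations, so $s_\succ(w)$ is an open subset of $\ua w$ containing $x$, giving $x \in \intr_{\mu^\prec}(\ua w)$.

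The only point that could trip one up is arranging for the witness to lie \emph{inside $U$} rather than merely inside $\ua z$; without (INT) one only obtains $x \in s_\succ(z)$ and cannot pass to a strictly interpolating element $w$ that is simultaneously a member of $U$ and satisfies $w \prec x$. That is precisely why (INT) is hypothesised, and this is the step I expect to highlight in the write-up.
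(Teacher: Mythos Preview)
Your proposal is correct and matches the paper's approach: Part~(1) is exactly Lemma~\ref{lem: lemma 1 to c-space theorem}, and Part~(2) is a faithful unpacking of the paper's one-line appeal to Lemma~\ref{lem: lemma 2 to c-space theorem}(2). One minor simplification worth noting for Part~(2): the explicit interpolation step is not needed, since from $x \in U = \lap{U}$ you already obtain $y \in U$ with $y \prec x$ directly (this is precisely how the base in Lemma~\ref{lem: lemma 2 to c-space theorem}(2) is produced), and then $s_\succ(y)$, open by Lemma~\ref{lem: lemma 2 to c-space theorem}(1), sits inside $\ua y$ and contains $x$; thus the essential role of (INT) here is to make $s_\succ(y)$ open, not to place the witness inside $U$ as your final paragraph suggests.
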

\begin{proof}
\begin{enumerate}
\item This is just Lemma~\ref{lem: lemma 1 to c-space theorem}.
\item This follows directly from Lemma~\ref{lem: lemma 2 to c-space theorem}(2).\qedhere
\end{enumerate}
\end{proof}

\begin{defi}[Completely distributive lattice] \label{defi: cdl}
Recall that a complete lattice is said to be \emph{completely distributive} if for any set family $(u^i_j)_{j \in J_i}$, one for each $i \in I$, $\biginf_{i \in I} \bigsup_{j \in J_i} u^i_j = \bigsup_{f \in \Pi_{i \in I} J_i} \biginf_{i \in I} u^i_{f(i)}$ holds.
\end{defi}

It was proven in~\cite{hoffmann81} that a $T_0$-space $X$ is a c-space if and only if the lattice of opens, $(\mathcal{O}(X),\subseteq)$ is completely distributive.

\begin{cor} \label{cor: cdl cor}
Let $P$ be a poset and ${\prec}\in \papp(P)$ satisfy (INT).
Then ${\prec}\in \app(P)$ if and only if $\mu^\prec(P)$ is completely distributive.
\end{cor}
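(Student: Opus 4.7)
The plan is to combine the two parts of Theorem~\ref{thm: c-space theorem} with the Hoffmann characterization cited immediately above, which equates ``c-space'' with ``open-set lattice is completely distributive'' for $T_0$-spaces.

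For the forward direction, I would assume $\prec \in \app(P)$. By Proposition~\ref{prop: order compatability of mu prec}, the $\mu^\prec$-topology is order-compatible, so $P_{\mu^\prec}$ is $T_0$: its specialization order is the antisymmetric $\leq$. Since $\prec$ satisfies (INT), Theorem~\ref{thm: c-space theorem}(2) then gives that $P_{\mu^\prec}$ is a c-space. Applying the Hoffmann characterization yields that $\mu^\prec(P)$ is completely distributive.

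For the reverse direction, I would assume $\mu^\prec(P)$ is completely distributive, invoke Hoffmann's theorem to recover c-spaceness of $P_{\mu^\prec}$, and then conclude via Theorem~\ref{thm: c-space theorem}(1) that $\prec \in \app(P)$.

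The main delicate point will be verifying the $T_0$-separation needed for Hoffmann's characterization in the reverse direction, since Proposition~\ref{prop: order compatability of mu prec} is not yet available (it presumes the very conclusion we aim to obtain). A clean way around this is to note that Theorem~\ref{thm: c-space theorem}(2) already produces c-spaceness directly from the standing hypotheses (pre-approximating plus (INT)), independently of complete distributivity; so the reverse implication can be routed as ``standing hypotheses $\Rightarrow$ c-space (by Thm~\ref{thm: c-space theorem}(2)) $\Rightarrow$ $\prec$ is approximating (by Thm~\ref{thm: c-space theorem}(1))'', with the complete-distributivity assumption playing only a nominal role. Either formulation reduces the corollary to a straightforward bookkeeping exercise over the two earlier theorems.
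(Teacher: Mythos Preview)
Your forward direction is exactly the route the paper intends (the corollary carries no proof; it is meant to follow from Theorem~\ref{thm: c-space theorem} together with Hoffmann's characterization), and your observation that the $T_0$ hypothesis needed for Hoffmann's theorem is not available a priori in the reverse direction is well taken. However, your proposed workaround is not sound. Chaining Theorem~\ref{thm: c-space theorem}(2) with Theorem~\ref{thm: c-space theorem}(1) as you suggest would prove that \emph{every} ${\prec}\in\papp(P)$ satisfying (INT) is already approximating, making the complete-distributivity hypothesis vacuous---and that alone should signal trouble. Indeed the conclusion is false: on $P=\{0,1\}$ with $0<1$, set $0\prec 0$ and $0\prec 1$ (and nothing else). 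This $\prec$ is a pre-approximating auxiliary relation with (INT), yet $\bigvee s_\prec(1)=0\neq 1$, so $\prec\notin\app(P)$. One checks $\mu^\prec(P)=\{\emptyset,P\}$, which is completely distributive, and $P_{\mu^\prec}$ is trivially a c-space; so this single example simultaneously refutes your workaround, Theorem~\ref{thm: c-space theorem}(1) as literally stated, and the reverse implication of the corollary itself.

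The root cause lies in the proof of Lemma~\ref{lem: lemma 1 to c-space theorem}: to obtain $\bigvee D_x=x$ it invokes Proposition~\ref{prop: order compatability of mu prec} to identify the specialization order of $P_{\mu^\prec}$ with $\leq$, but that proposition already assumes ${\prec}\in\app(P)$, which is precisely what is being proved. In the two-point example the points $0$ and $1$ are $\mu^\prec$-indistinguishable, and the argument collapses. The gap you detected is therefore genuine and cannot be repaired by rerouting through Theorem~\ref{thm: c-space theorem}(2); both part~(1) of that theorem and the present corollary need an additional hypothesis guaranteeing that the $\mu^\prec$-specialization order coincides with $\leq$.
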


\begin{cor} \label{cor: classical c-space thm}
Let $P$ be a poset.  Then $P$ is continuous if and only if the Scott space $P_\sigma$ is a c-space.
\end{cor}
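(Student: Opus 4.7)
For the forward direction, if $P$ is continuous then $\ll \in \app(P)$ and satisfies (INT) by Proposition~\ref{prop: cont implies way-below is INT}, so Theorem~\ref{thm: c-space theorem}(2) makes $P_{\mu^\ll}$ a c-space; Remark~\ref{thm: scott topology coincide with induced top by ll} then identifies $P_{\mu^\ll}$ with $P_\sigma$, finishing this implication.

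For the converse, assume $P_\sigma$ is a c-space. My plan is to manufacture an auxiliary relation directly from the Scott topology by setting
\[
y \prec x \iff x \in \intr_\sigma(\ua y),
\]
and then to show that $s_\prec(x)$ is a directed subset of $\dda x$ with supremum $x$, whence Proposition~\ref{prop: basis lemma} delivers continuity of $P$. Verifying that $\prec$ is auxiliary is routine, using the fact that Scott opens are upper and that $\intr_\sigma(\ua u) \supseteq \intr_\sigma(\ua y)$ whenever $u \leq y$. The substantive work is to apply the c-space hypothesis twice. First, given $y_1, y_2 \in s_\prec(x)$, the intersection $\intr_\sigma(\ua y_1) \cap \intr_\sigma(\ua y_2)$ is a Scott-open neighbourhood of $x$, so c-space produces some $y$ in it with $x \in \intr_\sigma(\ua y)$; this $y$ serves as an upper bound of $\{y_1,y_2\}$ inside $s_\prec(x)$. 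Second, to show $\bigsup s_\prec(x) = x$, I would suppose $z$ were an upper bound with $x \not\leq z$; then $P \setminus \da z$ is a Scott-open neighbourhood of $x$, and c-space yields some $y \in s_\prec(x) \setminus \da z$, contradicting the upper-bound property of $z$.

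Having established that $s_\prec(x)$ is directed with supremum $x$, I would finish by observing that $\prec \subseteq \ll$: if $y \prec x$ and $D$ is directed with existing supremum at least $x$, then $\bigsup D \in \intr_\sigma(\ua y)$ forces some $d \in D$ into $\ua y$. Consequently $s_\prec(x) \subseteq \dda x$ is a directed set with supremum $x$, and Proposition~\ref{prop: basis lemma} immediately gives the continuity of $P$. The main obstacle is the twofold use of the c-space hypothesis in pinning down both directedness and the exact supremum of $s_\prec(x)$; after that, the containment $\prec \subseteq \ll$ and the appeal to the basis lemma are essentially cost-free.
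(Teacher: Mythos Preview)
Your proof is correct and follows essentially the same strategy as the paper: both define the set $D_x = s_\prec(x) = \{y \mid x \in \intr_\sigma(\ua y)\}$, use the c-space hypothesis to show it is directed with supremum $x$, verify $D_x \subseteq \dda x$ via the inaccessibility of Scott opens by directed suprema, and conclude by Proposition~\ref{prop: basis lemma}. Your version is in fact slightly more streamlined, since the paper first takes a detour to show that every Scott open $U$ satisfies $U = U^{\da \ll}$ before defining $D_x$, whereas you work directly with $\intr_\sigma$ and make the two applications of the c-space hypothesis (for directedness and for the supremum) explicit.
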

\begin{proof}
$\implies$: If $P$ is continuous, then $\ll$ is approximating.  By Proposition~\ref{prop: cont implies way-below is INT}, $\ll$ satisfies (INT).  Invoking Theorem~\ref{thm: c-space theorem}(1), the Scott space $P_\sigma$ is a c-space.

$\Longleftarrow$: Assume that $P_\sigma$ is a c-space.  We claim that the Scott opens are precisely those upper sets $U$ for which $U = U^{\da \ll}$.  By Proposition~\ref{prop: ll open sets}, it suffices to prove that every Scott open set $U$ (which is of course upper) satisfies the condition $U = U^{\da \ll}$.  Let $U$ be any Scott open set.  Since $P_\sigma$ is a c-space, for each $x \in U$ there exists $y \in U$ such that $x \in \intr_\sigma(\ua y)$.  We claim that $y \ll x$.  Suppose $D$ is a directed subset whose supremum, $\bigsup D$, exists and $\bigsup D \geq x$.  Because $\intr_\sigma(\ua y)$ is upper and $x$ belongs to it, $\bigsup D \in \intr_\sigma(\ua y)$.  Since $\intr_\sigma(\ua y)$ is Scott open, it is inaccessible by directed suprema, i.e., there exists $d \in D$ such that $d \in \intr_\sigma(\ua y) \subseteq \ua y$; whence, $d \geq y$.  This proves that $y \ll x$.  Thus, $x \in U^{\da \ll}$.

Now let $x \in P$ be arbitrary. From the preceding argument, the non-empty set, $D_x$, defined earlier in the proof of Lemma~\ref{lem: lemma 1 to c-space theorem}, can now be rewritten as $D_x = \{y \in P \mid x \in \intr_{\mu^\ll}(\ua y)\}$.  The same argument as in the proof of Lemma~\ref{lem: lemma 1 to c-space theorem} justifies that $D_x \subseteq \dda x$.  Since the interior operator preserves binary intersection, $D_x$ is directed.
By Proposition~\ref{prop: basis lemma}, $\dda x$ must be directed and $\bigsup \dda x = x$, and the proof that $P$ is continuous is thus complete.
\end{proof}

\section{Scott closure in a continuous poset}
\label{sec: scott closure}
In this section, we showcase some domain-theoretic spin-offs that are a direct consequence of findings in the preceding section.  In particular, we recall a characterization of Scott closed sets (c.f. Corollary~\ref{cor: char of scott closed sets}), i.e., the Scott closure of a subset $A$ of a continuous poset comprises precisely those elements way-below which are all members of $\da A$.  We now proceed with a more in-depth study of the structure of $\cl_{\sigma}(A)$, following up this lead.

A subset $A$ of a poset $P$ is Scott closed if and only if it is a lower subset of $P$ and closed under the formation of existing directed suprema.
One might claim that if one takes all the existing directed suprema of subsets of $\da A$ then one should obtain the Scott closure of $A$.

\begin{example} \label{eg: countereg to A' being the Scott closure}
The preceding claim can be easily refuted by the following counterexample $P$: (See Figure~\ref{scott closure}.)
\begin{figure}[tbph!]
\centering{
\includegraphics[width=0.30\textwidth]{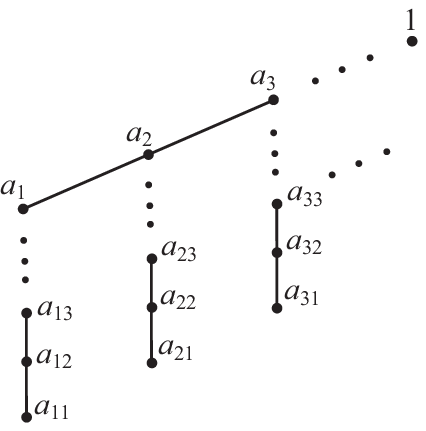}
}
\caption{$P$}
\label{scott closure}
\end{figure}

\noindent Let $A := \{a_{ij},i,j\in \mathbb{N}\}$.  Note that $1 \in \cl_{\sigma}(A) = P$ but there is no directed subset $D$ in ${\downarrow} A$ for which $\bigsup D = 1$.
Thus, taking all the existing directed suprema of $A$ in addition to those elements in $\da A$ does not, in general, form the Scott closure of $A$.
\end{example}

Let $P$ be a poset and $A\subseteq P$, we define
\[
A' := \{x \in P \mid \exists \text{ directed subset } D \text{ of } \da A.~x = \bigsup D\}.
\]

\begin{prop} \label{prop: one step closure of Scott closed set}
Let $P$ be a poset and $A \subseteq P$.  Then, the following hold:
\begin{enumerate}
\item $A \subseteq \da A \subseteq A' \subseteq \cl_{\sigma}(A)$.
\item $A' \subseteq A^{\uparrow\ll}$.
\item $A' = A$ if and only if $A$ is a Scott closed set.
\end{enumerate}
\end{prop}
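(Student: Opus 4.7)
The proposition is a collection of three elementary containments/equivalences about the operator $(-)'$, and my plan is to dispatch each in turn, relying only on the definition of $A'$, the characterization of Scott closed sets as lower sets closed under directed suprema, and the definition of $\ll$.

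For part (1), I would prove the three inclusions separately. The containment $A \subseteq \da A$ is immediate from the definition of the lower closure. For $\da A \subseteq A'$, given $x \in \da A$ I would take $D := \{x\}$, which is a directed subset of $\da A$ with supremum $x$, showing $x \in A'$. For $A' \subseteq \cl_\sigma(A)$, given $x \in A'$ I would write $x = \bigsup D$ with $D \subseteq \da A$ directed; since $\cl_\sigma(A)$ is Scott closed (hence a lower set containing $A$), we have $\da A \subseteq \cl_\sigma(A)$, so $D \subseteq \cl_\sigma(A)$; the closure of $\cl_\sigma(A)$ under directed suprema then places $x$ in $\cl_\sigma(A)$.

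For part (2), given $x \in A'$, I would fix a directed $D \subseteq \da A$ with $x = \bigsup D$ and then verify that $s_\ll(x) = \dda x \subseteq \da A$. Concretely, if $y \ll x$ then by the very definition of way-below applied to the directed set $D$ (whose supremum $x$ dominates $x$), there exists $d \in D$ with $y \leq d$; since $D \subseteq \da A$ and $\da A$ is a lower set, it follows that $y \in \da A$. This yields $x \in A^{\ua \ll}$.

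For part (3), the ``if'' direction uses that any Scott closed $A$ satisfies $A = \da A$ and is closed under directed suprema: any $x \in A'$ is the supremum of a directed subset of $\da A = A$, hence already lies in $A$, which together with $A \subseteq A'$ from (1) gives $A' = A$. For the ``only if'' direction, I would read off both defining properties of Scott closedness from the hypothesis $A' = A$: the chain in (1) forces $\da A \subseteq A' = A$, so $A$ is a lower set; and for any directed $D \subseteq A = \da A$ whose supremum exists, that supremum lies in $A' = A$ by definition of $A'$, giving closure under directed suprema. I do not anticipate any genuine obstacle here — the only mild subtlety is remembering in part (1) that $\cl_\sigma(A)$ contains $\da A$ (not merely $A$) because Scott closed sets are lower, and in part (2) that the directedness of $D$ is exactly what is needed to apply the definition of $\ll$.
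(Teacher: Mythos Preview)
Your proposal is correct and follows essentially the same approach as the paper: part (2) is argued identically, part (1) simply spells out what the paper labels ``Obvious'', and in part (3) your ``only if'' direction matches the paper's while for the ``if'' direction the paper takes the marginally slicker route of sandwiching $A \subseteq A' \subseteq \cl_\sigma(A) = A$ directly from the chain in (1), rather than verifying $A' \subseteq A$ by hand as you do.
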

\begin{proof}
\begin{enumerate}
\item Obvious.
\item Suppose $x \in A'$.  Then there is a directed subset $D$ of $\da A$ such that $\bigsup D = x$. For each $y \ll x$, there is an element $d \in D$ such that $y \leq d$.
    Thus, $y \in \da A$ so that $\dda x \subseteq \da A$.  Hence $x \in A^{\uparrow\ll}$.
\item If $A' = A$, then $A$ is a lower set and closed for existing directed sups which implies that $A$ is a Scott closed set. If $A$ is a Scott closed set, then by (1),
    we have $A \subseteq A' \subseteq \cl_{\sigma}(A) = A$.  Hence $A' = A$.\qedhere
\end{enumerate}
\end{proof}

\begin{defi}[One-step closure] \label{defi: one-step closure}
A poset $P$ is said to have \emph{one-step closure} if $A' = \cl_\sigma(A)$ for every subset $A \subseteq P$.
\end{defi}

\begin{remark} \label{remark: alternative of one-step closure}
Let $P$ be a poset and $A \subseteq P$.  If $A' = \cl_\sigma(A)$, then $A'$ is a Scott closed set.  Conversely, if $A'$ is a Scott closed set, then $\cl_\sigma(A') = A'$.
By Proposition~\ref{prop: one step closure of Scott closed set}(1), $A \subseteq A'$ always holds so that $\cl_\sigma(A) \subseteq \cl_\sigma (A') = A'$.  Again since Proposition~\ref{prop: one step closure of Scott closed set}(1) asserts that $A' \subseteq \cl_\sigma(A)$, we must have $A' = \cl_\sigma(A)$.  Thus, a poset $P$ has one-step closure if and only if $A'$ is Scott closed for every subset $A$ of $P$.
\end{remark}

\begin{thm} \label{thm: continuity implies one-step closure}
Every continuous poset has one-step closure.
\end{thm}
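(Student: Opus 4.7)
The plan is to combine two ingredients already established earlier in the paper: the characterization of Scott closure in a continuous poset given by Corollary~\ref{cor: char of scott closed sets}, and the very definition of continuity as requiring $\dda x$ to be directed with $\bigsup \dda x = x$. Proposition~\ref{prop: one step closure of Scott closed set}(1) already hands us the inclusion $A' \subseteq \cl_\sigma(A)$ for every subset $A$ of any poset, so the only real content is the reverse containment $\cl_\sigma(A) \subseteq A'$.

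To establish the reverse containment, I would fix $A \subseteq P$ with $P$ continuous, and take an arbitrary $x \in \cl_\sigma(A)$. By Corollary~\ref{cor: char of scott closed sets}, this is precisely the assertion that $\dda x \subseteq \da A$. Continuity of $P$ then tells us that $\dda x$ is itself a directed subset of $P$ with $\bigsup \dda x = x$. Taking $D := \dda x$ therefore witnesses $x \in A'$ directly from the definition of $A'$, because $D$ is a directed subset of $\da A$ whose supremum is $x$.

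Combining the two inclusions yields $A' = \cl_\sigma(A)$ for every $A \subseteq P$, which is exactly the statement that $P$ has one-step closure in the sense of Definition~\ref{defi: one-step closure}.

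There is essentially no obstacle here, since the heavy lifting was carried out in Corollary~\ref{cor: char of scott closed sets}; the only subtlety worth flagging is that the witnessing directed set in the definition of $A'$ is allowed to be $\dda x$ itself, a subset of $\da A$ rather than of $A$, which is consistent with $A'$ being defined via $\da A$.
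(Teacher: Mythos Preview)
Your proof is correct and follows essentially the same approach as the paper's: invoke Proposition~\ref{prop: one step closure of Scott closed set}(1) for one inclusion, and for the reverse use Corollary~\ref{cor: char of scott closed sets} together with continuity so that $\dda x$ is a directed subset of $\da A$ with supremum $x$, witnessing $x \in A'$.
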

\begin{proof}
Let $P$ be a continuous poset and $A \subseteq P$.  Then, on one hand, $A' \subseteq \cl_{\sigma}(A)$ holds by Proposition~\ref{prop: one step closure of Scott closed set}(1). On the other hand, if $x \in \cl_{\sigma}(A)$, then $\dda x \subseteq \da A$ by virtue of Corollary~\ref{cor: char of scott closed sets}.  But $P$ is continuous so that
$\dda x$ is directed.  These together imply that $x \in A'$.  Thus, we can conclude that $\cl_{\sigma}(A) = A'$.
\end{proof}

\begin{remark}
At the point of writing, it was brought to the attention of the authors that Theorem~\ref{thm: continuity implies one-step closure} had earlier been established by Dongsheng Zhao in his Ph.D. thesis (see \cite[Lemma 2.5]{zhao92}) but the argument we use here has a very different motivation from Zhao's.
\end{remark}

What order-theoretic properties does a poset having one-step closure enjoy?
To answer this question, we need to recall the notion of meet-continuity of posets which was  first introduced by~\cite{kouetal01}.
\begin{defi} \label{defi: meet-continuity}
A poset $P$ is called \emph{meet continuous} if for any $x \in P$ and any directed subset $D$, whenever $\bigsup D$ exists and $x \leq \bigsup D$, then
$x \in \cl_{\sigma}(\da D \cap \da x)$.
\end{defi}

\begin{thm} \label{thm: one-step closure implies meet-continuity}
All posets having one-step closure are meet-continuous.
\end{thm}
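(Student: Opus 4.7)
The plan is to unpack one-step closure twice, using the first application to find a witnessing directed set, and the second application to land $x$ in the desired Scott closure.

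Fix $x\in P$ and a directed $D\subseteq P$ for which $\bigsup D$ exists and $x\leq \bigsup D$. I first want to show that $x$ lies in $\cl_\sigma(D)$. The slick way is to use one-step closure on $D$ itself: since $D$ is directed and $D\subseteq \da D$, by definition $\bigsup D\in D'$, and by the assumption $D'=\cl_\sigma(D)$. Scott closed sets are lower, so $x\leq \bigsup D\in\cl_\sigma(D)$ forces $x\in \cl_\sigma(D)=D'$.

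Now unpack $x\in D'$: there exists a directed subset $E\subseteq \da D$ with $\bigsup E=x$. The key small observation is that every element of $E$ is then an element of $\da x$, since $E$ has supremum $x$. Hence $E\subseteq \da D\cap \da x$. Set $A:=\da D\cap \da x$; this $A$ is already a lower set, so $\da A=A$, and $E$ is a directed subset of $\da A$ with supremum $x$. By definition, $x\in A'$, and applying one-step closure a second time gives $x\in A' = \cl_\sigma(A)=\cl_\sigma(\da D\cap \da x)$, which is exactly the defining condition of meet-continuity.

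The only non-mechanical step is step one — noticing that one-step closure applied to $D$ already puts $x$ into $\cl_\sigma(D)$, so the whole directed witness provided by the definition of $D'$ can be re-used (intersected with $\da x$) as a witness for $(\da D\cap \da x)'$. Once that observation is made, everything else is formal manipulation of lower sets. I do not expect any serious obstacle; the hardest part conceptually is resisting the temptation to search for the directed set in $\da D\cap \da x$ directly, and instead letting one-step closure produce it.
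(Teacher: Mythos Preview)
Your proof is correct and essentially identical to the paper's own argument: both use one-step closure on $D$ to place $x$ in $D'$, extract a directed witness $E\subseteq \da D$ with $\bigsup E = x$, observe $E\subseteq \da D\cap \da x$, and then invoke one-step closure once more on $\da D\cap \da x$. The only cosmetic difference is that the paper phrases the first application as ``$D'$ is Scott closed, hence $x\in D'$'' rather than ``$D'=\cl_\sigma(D)$'', which by Remark~\ref{remark: alternative of one-step closure} is the same thing.
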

\begin{proof}
Suppose that there is a directed set $D \subseteq P$ whose supremum exists and $x \in P$ is such that $x \leq \bigsup D$.  Then $\bigsup D \in D'$. Since $P$ has one-step closure, it follows that $D'$ is a Scott closed set which then implies that $x \in D'$.  So, there exists a directed set $D_1 \subseteq \da D$ such that $\bigsup D_1 = x$.  It is clear that $D_1 \subseteq \da D \cap \da x$ and hence $x \in (\da D \cap \da x)' = \cl_\sigma(\da D \cap \da x)$.  This then completes the proof that $P$ is meet continuous.
\end{proof}

\begin{thm} \label{thm: one-step closure implies Scott int is way-above}
Let $P$ be a poset that has one-step closure.
Then $\intr_\sigma(\ua x) = \dua x$ for each $x \in P$.
\end{thm}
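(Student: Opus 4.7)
The plan is to establish the two inclusions $\intr_\sigma(\ua x) \subseteq \dua x$ and $\dua x \subseteq \intr_\sigma(\ua x)$ separately, with only the latter relying on the one-step closure hypothesis.

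For the first inclusion, which is actually a general fact about any poset and requires no hypothesis, I would take $y \in \intr_\sigma(\ua x)$ and verify the definition of $x \ll y$ directly. Given a directed set $D$ whose supremum exists and satisfies $\bigsup D \geq y$, the Scott-openness of $\intr_\sigma(\ua x)$ (which in particular makes it an upper set) together with $y \in \intr_\sigma(\ua x)$ forces $\bigsup D \in \intr_\sigma(\ua x)$. Inaccessibility by directed suprema then yields some $d \in D \cap \intr_\sigma(\ua x) \subseteq \ua x$, so $d \geq x$.

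For the reverse inclusion I would argue by contrapositive. Suppose $y \notin \intr_\sigma(\ua x)$, equivalently $y \in \cl_\sigma(P \setminus \ua x)$, and aim for $x \not\ll y$. The key preliminary observation is that $P \setminus \ua x$ is automatically a lower subset of $P$, so $\da(P \setminus \ua x) = P \setminus \ua x$ and the set-operator $(\cdot)'$ in Definition~\ref{defi: one-step closure} applies to it without any wrapping. Invoking the one-step closure hypothesis, $\cl_\sigma(P \setminus \ua x) = (P \setminus \ua x)'$, so there is a directed subset $D \subseteq P \setminus \ua x$ with $\bigsup D = y$. But then $x \ll y$ would demand some $d \in D$ with $d \geq x$, i.e., $d \in \ua x$, contradicting $D \cap \ua x = \emptyset$. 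Hence $x \not\ll y$, as desired.

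I do not anticipate a substantive obstacle; the entire argument is a short chase through the definitions once the one-step closure hypothesis is plugged in. The only point that merits care is recognising that the complement $P \setminus \ua x$ is already a lower set, which is what lets the equation $\cl_\sigma(A) = A'$ be applied with $A := P \setminus \ua x$ to directly extract the directed set $D$ that ultimately contradicts $x \ll y$.
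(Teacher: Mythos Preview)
Your proposal is correct and follows essentially the same route as the paper: both arguments dispose of the inclusion $\intr_\sigma(\ua x) \subseteq \dua x$ as a generality, then for the reverse inclusion pass to the complement, apply one-step closure to rewrite $\cl_\sigma(P \setminus \ua x)$ as $(P \setminus \ua x)'$, and use the resulting directed set $D \subseteq P \setminus \ua x$ with $\bigsup D = y$ to contradict $x \ll y$. The only cosmetic differences are that you spell out the proof of the first inclusion (the paper merely asserts it) and frame the second as a contrapositive rather than a contradiction.
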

\begin{proof}
Since $\intr_\sigma (\ua x) \subseteq \dua x$ always holds for any element $x \in P$, it remains to prove the reverse containment.  For the sake of contradiction, suppose that
there is $y \in \dua x \setminus \intr_\sigma (\ua x)$.  Then
$y \in P \setminus \intr_\sigma(\ua x) = \cl_{\sigma}(P \setminus \ua x)
= (P \setminus \ua x)'$.  So there is a directed subset
$D \subseteq \da (P \setminus \ua x) = P \setminus \ua x$ such that
$\bigsup D = y$.  This will run contrary to $x \ll y$.
Hence $\dua x \subseteq \intr_\sigma( \ua x)$.
\end{proof}

\begin{remark}
Theorems~\ref{thm: one-step closure implies meet-continuity} and~\ref{thm: one-step closure implies Scott int is way-above} give us a more fine-grained means to explain the phenomena that continuous posets are always meet-continuous and possess one-step closure.
\end{remark}

At the point of writing, we are not able to completely characterize those poset that have a one-step closure.

\section{Conclusion}
\label{sec: conclusion}
In this paper, we develop the core of domain theory using suitably defined approximation operators.

In domain theory, the role played by auxiliary relations is indeed quite auxiliary -- once the continuity of a dcpo $L$ is characterized by the condition that the way-below relation $\ll$ is the smallest approximating auxiliary relation on $L$, the auxiliary relations are almost never mentioned in the subsequent development of domain theory.  For instance, auxiliary relations are involved in the study of the Scott topology (which is of central importance in domain theory).  Our approximation operators fill in this gap by exploiting the lower and upper approximation operators induced by auxiliary relations.  Many salient domain-theoretic concepts such as interpolation property (INT), Scott topology, etc., take on new meanings in the light of operator theory.  Using our approximation operators, not only can several well-known results regarding the way-below relation for continuous posets be generalized to the context of approximating relations, new domain-theoretic properties regarding continuous posets have also been obtained.

While it is theoretically pleasing to see that the core of domain theory can be formulated and developed using operator theory, one can perhaps look for more practical applications derived from the matrimony of domain theory and operator theory.  In real life situations, there may be opportunities to export the domain-theoretic considerations derived in this paper back into the world of imprecise knowledge.

We end this paper with some open problems:
\begin{question}\leavevmode
\begin{enumerate}
\item Characterize those posets having a one-step closure.
\item Quasicontinuous posets are a generalization of continuous posets and are gaining much research attention recently~\cite{gierzetal83}.  Can our operator theoretic approach in this paper be employed to study quasi-continuous posets?
\end{enumerate}
\end{question}

\end{document}